\documentclass[en,founder]{simplepaper}
\begin{document}
\title{Complex Principle Kurtosis Analysis}
\author{Liangliang Zhu\footnote{School of Computer Science and Information Engineering, Hefei University of Technology, Hefei 230601, Anhui Province, China}, Zhebin Song\footnotemark[1], Xuesen Zhang\footnote{Key Laboratory of aperture array and space exploration, No.38 Research Institute of CETC, Hefei 230031, Anhui Province, China}, and Meibin Qi\footnotemark[1] \footnote{Corresponding author: qimeibin@163.com}}
\date{\today}
\maketitle
\begin{abstract}
    Independent component analysis (ICA) is a fundamental problem in the field of signal processing, and numerous algorithms have been developed to address this issue. The core principle of these algorithms is to find a transformation matrix that maximizes the non-Gaussianity of the separated signals. Most algorithms typically assume that the source signals are mutually independent (orthogonal to each other), thereby imposing an orthogonal constraint on the transformation matrix. However, this assumption is not always valid in practical scenarios, where the orthogonal constraint can lead to inaccurate results. Recently, tensor-based algorithms have attracted much attention due to their ability to reduce computational complexity and enhance separation performance. In these algorithms, ICA is reformulated as an eigenpair problem of a statistical tensor. Importantly, the eigenpairs of a tensor are not inherently orthogonal, making tensor-based algorithms more suitable for nonorthogonal cases. Despite this advantage, finding exact solutions to the tensor's eigenpair problem remains a challenging task. In this paper, we introduce a non-zero volume constraint and a Riemannian gradient-based algorithm to solve the tensor's eigenpair problem. The proposed algorithm can find exact solutions under nonorthogonal conditions, making it more effective for separating nonorthogonal sources. Additionally, existing tensor-based algorithms typically rely on third-order statistics and are limited to real-valued data. To overcome this limitation, we extend tensor-based algorithms to the complex domain by constructing a fourth-order statistical tensor. Experiments conducted on both synthetic and real-world datasets demonstrate the effectiveness of the proposed algorithm.
\end{abstract}
\section{Introduction}

Blind Source Separation (BSS), which aims to isolate source components from mixed signals, is widely used in various fields, including audio signal processing \cite{liutkus2013overview, makino2005blind, makino2006blind}, radar signal processing \cite{fabrizio2014blind, guo2017time, cardoso1993blind}, image processing \cite{meganem2014linear, golbabaee2010distributed}, and so on \cite{hirayama2015unifying, choi2005blind}. More specifically, in audio signal processing, the most common task of BSS is to extract source signals without specific prior information about sources \cite{vincent2006performance}. In radar signal processing, BSS can be used to perform beamforming without knowing array manifold \cite{cardoso1993blind} or separate the interference signals without knowing any prior knowledge of interference \cite{huang2004radar}.

Since the BSS problem was first introduced by C.Jutten in his PhD thesis and reproduced in a concise form in \cite{jutten1988solution}, numerous algorithms has been developed. Moreover, the application scenarios of these algorithms can be roughly divided into two categories, namely instantaneous mixture and convolutive mixture \cite{murata1998line, makino2005blind}. In the instantaneous mixture case, observed signals are the linear combinations of source signals, hence, can be denoted by the multiplication of the mixture matrix and the signal matrix. Meanwhile, convolutive mixture indicates that signals are mixed in a convolutive manner, i.e., each observe signal can be modeled as a weighted sum of different time delayed source signals \cite{makino2005blind, kemiha2017complex}. A typical instance of the convolutive mixture is the occurrence of reverberation. Notice that convolutive mixing can be transformed into instantaneous mixing by applying specific transformations to the observed signals, such as Fast Fourier Transform (FFT) and Short-Time Fourier Transform (STFT) \cite{kurita2000evaluation, hiroe2006solution, kim2006independent, yatabe2021determined}. Therefore, in this manuscript, we mainly focus on the instantaneous mixing case.

Independent component analysis (ICA), as a special case of BSS, is able to deal with non-Gaussian signals, hence, becomes one of the most popular reasearch topics in recents years. Initially, scholars use neural networks with stochastic gradient descend (SGD) to perform ICA. However, these methods are critically flawed due to their slow convergence and the difficulty of selecting appropriate learning rate parameters \cite{hyvarinen1997fast, bell1995information, oja1995signal}. In \cite{hyvarinen1997fast}, the author introduced the fast fixed-point algorithm for independent component analysis (FastICA), which later became one of the most widely used algorithms. Generally, FastICA tries to find a linear unmixing vector that maximizing or minimizing some particular non-Gaussianity metric of the umixed signals. FastICA adopts an iterative algorithm that can be considered as an approximation of the Newton's method to find the optimal unmixing vectors one by one. Consequently, it does not require any learning rate parameter and achieves significantly faster convergence. Notice that in ICA, the source signals are assumed to be statistical independent, hence in FastICA, the unmixing matrix is constrained to be orthogonal. Moreover, in FastICA, the metric to measure the non-Gaussianity of the unmixed signals can be variously selected, e.g., skewness, kurtosis, negentropy, and so on. Unfortunately, the original FastICA algorithm can only process real signals, it is limited in handling with data with complex values such as the radar signal. To expand the scope of applications, Bingham proposed the complex version of FastICA, namely cFastICA, \cite{Bingham2000A}. Similar to FastICA, cFastICA also adopts the orthogonal constraint to prevent the unmixing vectors from converging into the same point. Meanwhile, in \cite{cardoso1993blind}, the author noticed that when using the fourth-order cumulants as the metric of non-Gaussianity, the problem can be solved by jointly diagonalizing a set of matrices. Such a diagonalization task can be efficiently approached by the famous singular value decomposition (SVD) algorithm. Thus, the proposed algorithm named Joint Approximate Diagonalization of Eigen-matrices (JADE) is usually applied in scenarios that demand high real-time performance, such as audio signal and radar signal processing. Furthermore, because of the usage of SVD, the obtained unmixing matrix in JADE is also orthogonal.

Although ICA algorithms are varied, their cost function can be divided into two categories, one is based on information entropy and the other is based on high-order statistics. In \cite{geng2014principal}, the authors found that the ICA problem can also be solved by finding the eigenpairs of the statistical tensor of the mixed signals, and proposed a new algorithm named Principal Skewness Analysis (PSA), where the eigenvectors are exactly the desired unmixing vectors. Although the iterative formula of PSA was proved to be equivalent to the iterative formula of FastICA based on 3-order statistics (skewness), the introduction of the statistical tensor demonstrates the potential for of an algorithm with lower computational complexity and improved performance. In \cite{geng2020npsa}, the author indicated that the eigenvectors of a high order statistical tensor were naturally nonorthogonal. However, most existing algorithms, such as FastICA and JADE, constraint the unmixing vectors to be orthogonal to each other. Therefore, existing algorithms cannot get the accurate solution in most cases. To overcome this problem, the authors of PSA further proposed a nonorothgonal constraint based on Kronnecker product. The algorithm, named Nonorthogonal Principal Skewness Analysis (NPSA), is able to obtain nonorothgonal eigenvectors, thus achieves a better separation performance than PSA. However, the obtained eigenpairs in NPSA are still not the accurate ones. Although algorithms based on information entropy \cite{yatabe2021determined, kim2006independent} naturally contain nonorthogonal constraints, the discussion of entropy estimation is only limited in independent signals and entropy for dependent source signals are hard to measure, which results in the inaccurate solution in nonorthogonal cases.

To expand the application scenarios, such as radar signal processing, of tensor-based BSS algorithm, we study the high-order complex extension of PSA and NPSA. Besides, neither PSA nor NPSA discussed the impact of noise power on solution. Therefore, we derive the influence of the noise in detail in this paper. Based on the two findings above, we aim to propose a high-performance and widely applicable BSS algorithm.

\section{Background}
In this section, we will give a brief review of the formulations and deficiencies of FastICA, PSA and JADE. Assume that there are \( N \) source signals \( \mathbf{S} = \begin{bmatrix} \bm{s}_1 & \bm{s}_2 & \cdots & \bm{s}_L \end{bmatrix} \in \mathbb{R}^{N \times L}\), where \( L \) is number of samples. If the instantaneous mixture model is considered, then the recorded signals can be expressed as a linear combination of the source signals, i.e.,
\[
    \mathbf{X} = \mathbf{A}\mathbf{S} = \begin{bmatrix}
        \mathbf{x}_1 & \mathbf{x}_2 & ... & \mathbf{x}_{L}
    \end{bmatrix}\in \mathbb{R}^{M \times L},
\]
where \( \mathbf{A} \in \mathbb{R}^{M \times N}\) is the mixing matrix and \( M \) is the number of recorded signals. Generally, the mixing matrix is unknown, and \( M \geq N \). For convenience, we let \( M = N \) in this paper. The goal of the mentioned ICA algorithms is to estimate the unmixing matrix \( \mathbf{W} \in  N \times N  \) such that the source signals can be recovered by \( \mathbf{S} = \mathbf{W}^{\mathrm{T}} \mathbf{X} \). The details are given in the following subsections.


\subsection{FastICA}
The basic idea of FastICA is to maximize the non-gaussianity of the estimated source signals. Denote the non-gaussianity metric function as \( G(\cdot) \), which can be performed on each element of the input vector to obtain a new vector, then the optimization problem of FastICA can be formulated as
\[
    \left\{
    \begin{aligned}
        \max        & \quad G(\mathbf{w}^{\mathrm{T}} \mathbf{X}) \mathbf{1}_L \\
        \text{s.t.} & \quad \mathbf{w}^{\mathrm{T}} \mathbf{w} = 1
    \end{aligned}
    \right..
\]
where \( \mathbf{1}_L \) is a \( L \times 1 \) vector with all elements equal to 1, and \( \mathbf{w} \in \mathbb{R}^{N \times 1} \) is the unmixing vector for a specific source signal. The iterative formula for \( \mathbf{w} \) can be obtained as follows
\begin{equation}
    \label{eq:FastICAFormular}
    \mathbf{w} \leftarrow \frac{1}{L} G''(\mathbf{w}^{\mathrm{T}} \mathbf{X})\mathbf{1}_L - \mathbf{X}G'(\mathbf{w}^{\mathrm{T}} \mathbf{X})^{\mathrm{T}}, \quad \mathbf{w} \leftarrow \frac{\mathbf{w}}{||\mathbf{w}||},
\end{equation}
where \( G'(\cdot) \) and \( G''(\cdot) \) are the first and second order derivatives of \( G(\cdot) \) respectively. The iterative formula is performed until the convergence condition is satisfied. Assume that \( \mathbf{W}_k = \begin{bmatrix} \mathbf{w}_1 & \mathbf{w}_2 & \cdots & \mathbf{w}_k \end{bmatrix} \) is the matrix consisting the first \( k \) obtained unmixing vectors, then to ensure that the \( k+1 \) unmixing vector will not converge into the previous \( k \) unmixing vectors, the following orthogonal constraint is enforced during the iterative process
\[
    \mathbf{w}_{k+1} \leftarrow \mathbf{P}_{\mathbf{W}_k} \mathbf{w}_{k+1},
\]
where \( \mathbf{P}_{\mathbf{W}_k} = \mathbf{I} - \mathbf{W}_k \mathbf{W}_k^{\mathrm{T}} \) is the orthogonal complement projection matrix, $\mathbf{I}$ is the identity matrix.
Moreover, it can be found from \cref{eq:FastICAFormular} that every iteration involves the calculation of the whole data matrix \( \mathbf{X} \). This could lead to high computing complexity when the number of samples \( L \) is large. To reduce the computing complexity, Geng \cite{geng2014principal, geng2020npsa} proposed PSA, where each iteration only involves the calculation of the third-order coskewness tensor of the data. The details of PSA will be given in the next subsection.

\subsection{PSA}
PSA tries to maximize the skewness of the estimated source signals, and the skewness of a vector \( \mathbf{x} = \begin{bmatrix} x_1 & x_2 & \cdots & x_L \end{bmatrix}^{\mathrm{T}} \) is defined as
\[
    \text{skew}(\mathbf{x}) = \frac{1}{L} \sum_{i=1}^{L} \left( \frac{x_i - \bar{x}}{\sigma} \right)^3,
\]
where \( \bar{x} \) is the mean of \( \mathbf{x} \) and \( \sigma \) is the standard deviation of \( \mathbf{x} \). If the vector \( \mathbf{x} \) is prewhitened, i.e., \( \bar{x} = 0 \) and \( \sigma = 1 \), then the skewness of \( \mathbf{x} \) can be simplified as
\[
    \text{skew}(\mathbf{x}) = \frac{1}{L} \sum_{i=1}^{L} x_i^3.
\]

Assume that the data matrix \( \mathbf{X} \) is prewhitened, i.e., \( \mathbf{X} \mathbf{1}_L = \mathbf{0}_{N} \) and \( \frac{1}{L} \mathbf{X}^{\mathrm{T}} \mathbf{X} = \mathbf{I}_{N \times N} \). Then, after the projection of \( \mathbf{X} \) by the unmixing vector \( \mathbf{w} \), the skewness of the projected data can be expressed as
\begin{equation} \label{eq:skewness}
    \text{skew}(\mathbf{w}^{\mathrm{T}} \mathbf{X}) = \frac{1}{L} \sum_{i=1}^{L} (\mathbf{w}^{\mathrm{T}} \mathbf{x}_i)^3 ,
\end{equation}
By introducing the third-order coskewness tensor \( \mathcal{S} \in \mathbb{R}^{N \times N \times N}\), the skewness of the projected data can be rewritten as
\[
    \text{skew}(\mathbf{w}^{\mathrm{T}} \mathbf{X}) = \mathcal{S} \times_1 \mathbf{w} \times_2 \mathbf{w} \times_3 \mathbf{w}.
\]
Just like the covariance matrix, the third-order coskewness tensor \( \mathcal{S} \) contains the complete third-order statistical information of the data, and has the following definition
\[
    \mathcal{S}_{ijk} = \frac{1}{L} \sum_{l=1}^{L} X_{il} X_{jl} X_{kl},
\]
where \( X_{il} \) is the \( (i,l) \)th element of \( \mathbf{X} \). It should be noted that \( \times_n \) in \cref{eq:skewness} represents the $n$-mode product, which can be considered as the generalization of the matrix product. The $n$-mode product of a tensor $\mathcal{A} \in \mathbb{R}^{I_1 \times \cdots I_{n-1} \times I_n \times I_{n+1} \cdots \times I_N}$ with $\mathbf{U} \in \mathbb{R}^{I_n \times J}$ is denoted as $\mathcal{A} \times_n \mathbf{U} \in \mathbb{R}^{I_1 \times \cdots I_{n-1} \times J \times I_{n+1}\cdots \times I_N}$, and follows the calculation rule
\[
    (\mathcal{A} \times_n \mathbf{U})_{i_1 \cdots i_{n-1}ji_{n+1} \cdots i_N} = \sum_{i_n=1}^{I_n} a_{i_1 \cdots i_{n-1} i_n i_{n+1} \cdots i_N} U_{i_nj}.
\]
where $(\mathcal{A} \times_n \mathbf{U})_{i_1 \cdots i_{n-1}ji_{n+1} \cdots i_N}$ means the $(i_1, \cdots, i_{n-1}, j, i_{n+1}, \cdots, i_N)$-th element of $\mathcal{A} \times_n \mathbf{U}$, \( a_{i_1i_2 \cdots i_n \cdots i_N} \) and \( U_{i_nj} \) are the elements of \( \mathcal{A} \) and \( \mathbf{U} \) respectively.

As the tensor notation is adopted, the optimization problem of PSA can be expressed in the following form
\[
    \left\{
    \begin{aligned}
        \max        & \quad \mathcal{S} \times_1 \mathbf{w} \times_2 \mathbf{w} \times_3 \mathbf{w} \\
        \text{s.t.} & \quad \mathbf{w}^{\mathrm{T}} \mathbf{w} = 1
    \end{aligned}
    \right..
\]
Then the problem is turned into a tensor eigenvector problem, and the iterative formula of PSA can be obtained as
\begin{equation} \label{eq:PSAFormular}
    \mathbf{w} \leftarrow \mathcal{S} \times_2 \mathbf{w} \times_3 \mathbf{w}, \quad \mathbf{w} \leftarrow \frac{\mathbf{w}}{||\mathbf{w}||}.
\end{equation}
Similar to FastICA, the orthogonal constraint is enforced during the iterative process to ensure that the unmixing vectors will not converge into each other in PSA. Note that the iterative formula of PSA \cref{eq:PSAFormular} only involves the calculation of the third-order coskewness tensor of the data, which makes it more efficient than FastICA when the number of samples \( L \) is large. Moreover, if we adopt the skewness as the non-gaussianity metric function in FastICA, the results of FastICA and PSA will be exactly the same. Therefore, in practical applications, PSA is preferred over FastICA when dealing with source signals that exhibit high skewness.

\subsection{JADE}

The two algorithms aforementioned are only aimed at real signals, while the JADE algorithm is able to process complex signals and select fourth-order cumulants as the non-gaussianity metric. Assume that there exist four complex vectors \( \mathbf{x} = \begin{bmatrix} x_1 & x_2 & \cdots & x_L \end{bmatrix}^{\mathrm{H}} \), \( \mathbf{y} = \begin{bmatrix} y_1 & y_2 & \cdots & y_L \end{bmatrix}^{\mathrm{H}} \), \( \mathbf{z} = \begin{bmatrix} z_1 & z_2 & \cdots & z_L \end{bmatrix}^{\mathrm{H}} \) and \( \mathbf{w} = \begin{bmatrix} w_1 & w_2 & \cdots & w_L \end{bmatrix}^{\mathrm{H}} \), where $\cdot^{\mathrm{H}}$ means conjugate transpose, then the fourth-order cumulants of \( \mathbf{x} \), \( \mathbf{y} \), \( \mathbf{z} \) and \( \mathbf{w} \) can be calculated as
\[
    \begin{split}
        \text{cum}(\mathbf{x}, \mathbf{y}^*, \mathbf{z}, \mathbf{w}^*) & = \frac{1}{L}\sum_{m=1}^{L}x_m y_m^* z_m w_m^* - \frac{1}{L^2}\sum_{m=1}^{L}x_m y_m^*\sum_{m=1}^{L}z_m w_m^*                    \\
                                                                       & -\frac{1}{L^2}\sum_{m=1}^{L}x_m z_m\sum_{m=1}^{L}y_m^{*} w_m^{*} - \frac{1}{L^2}\sum_{m=1}^{L}x_m w_m^*\sum_{m=1}^{L}z_m y_m^*,
    \end{split}
\]
where \( * \) denotes conjugation.

Another difference between JADE and the other two algorithms is that JADE simultaneously estimates all the unmixing vectors as a time. Let \( \mathbf{W} = \begin{bmatrix} \mathbf{w}_1 & \mathbf{w}_2 & \cdots & \mathbf{w}_N \end{bmatrix} \) be the matrix that consists of all wanted unmixing vectors, then the optimization function of JADE can be formulated as follows, noting that the data matrix $\mathbf{X}$ is complex,
\begin{equation}\label{eq:JADEFormular}
    \begin{cases}
        \max        & \quad \sum_{i,k,l = 1}^{N} |\text{cum}(\mathbf{w}_i^{\mathrm{H}}\mathbf{X}, \mathbf{w}_i^{\mathrm{H}}\mathbf{X}^*, \mathbf{w}_k^{\mathrm{H}}\mathbf{X}, \mathbf{w}_l^{\mathrm{H}}\mathbf{X}^*)|^2 \\
        \text{s.t.} & \quad \mathbf{W}^{\mathrm{H}}\mathbf{W} = \mathbf{I}
    \end{cases}.
\end{equation}

Cardoso proved that the optimization problem above was equivalent to a joint matrix diagonalization problem as follows
\begin{equation}
    \begin{cases}
        \max        & \quad \sum_{k, l=1}^{K} \lVert \text{diag}(\mathbf{W} \mathbf{N}_{kl} \mathbf{W}^{\mathrm{H}}) \rVert^2 \\
        \text{s.t.} & \quad \mathbf{W}^{\mathrm{H}}\mathbf{W} = \mathbf{I}
    \end{cases} ,
\end{equation}
where $\text{diag}(\cdot)$ means extracting the diagonal elements of the matrix to form a new vector and
\[
    (\mathbf{N}_{kl})_{ij} = \text{cum}(X_i, X_j^*,X_k, X_l^*),
\]
where $X_i$, $X_j$, $X_k$ and $X_l$ are the $i$-th, $j$-th, $k$-th and $l$-th row vectors of $\mathbf{X}$, respectively. Moreover, the unmixing matrix $\mathbf{W}$ can be obtained via joint matrix diagonalization algorithm, such as successive Givens rotation \cite{cardoso1993blind}.

\section{Method}

Inspired by PSA, we find that the methods based on fourth-order statistics can also be contributed as a high-order tensor eigenpair problem. Hence, in this section, we first derive the basic principles of Principal Kurtosis Analysis (PKA), which can be considered as a generalization of PSA. Then, we notice that the eigenvectors of a high-order tensor are generally unorthogonal, which means that the orthogonal constraint used in previously mentioned methods could result in a suboptimal solution. To overcome this problem, we propose a brand new non-zero volume constraint. Further, in order to accelerate the convergence speed and improve the accuracy of the solution, we introduce the Riemann gradient.

\subsection{Principal Kurtosis Analysis}
As well known, the kurtosis of a vector $\mathbf{x}=\begin{bmatrix} x_1 & x_2 & ... & x_L \end{bmatrix}^{\mathrm{T}}$ is defined as follows:
\begin{equation}\label{eq:kurtosis}
    \text{kurt}(\mathbf{x}) = \frac{1}{L}\sum_{i=1}^{L}\frac{(x_{i} - \bar{x})^4}{\sigma^4} ,
\end{equation}
where $\bar{x}$ is the mean value of $\mathbf{x}$ and $\sigma$ is the standard variance. If the signal is pre-whitened, i.e., $\bar{x}=0$ and $\sigma^2=1$, then \cref{eq:kurtosis} can be simplified to
\begin{equation}
    \text{kurt}(\mathbf{x}) = \frac{1}{L}\sum_{i=1}^{L}x_{i}^4.
\end{equation}
For a complex vector, it can be written as
\begin{equation}
    \text{kurt}(\mathbf{x}) = \frac{1}{L}\sum_{i=1}^{L}|x_{i}|^4.
\end{equation}
Compared with the third-order statistics, the fourth-order statistics (kurtosis) is a much more popular choice for dealing with complex data. Moreover, through a thorough investigation, we find that ICA methods based on kurtosis can be attributed to a high-order tensor eigenpair problem. The details can be seen in the following theorem.

\begin{theorem} \label{thm:kurtTensor}
    Supposing $\mathbf{X}=\begin{bmatrix} \mathbf{x}_1 & \mathbf{x}_2 & ... & \mathbf{x}_L \end{bmatrix} \in \mathbb{C}^{N \times L}$, which is a pre-whitened complex data matrix, then the fourth-order statistics of $\mathbf{X}$ is completely determined by a fourth-order tensor $\mathcal{C} \in \mathbb{C}^{N \times N \times N \times N}$. More specifically, for a complex vector $\mathbf{w} \in \mathbb{C}^{N}$, the kurtosis of the projected data $\mathbf{w}^{\mathrm{H}}\mathbf{X}$ is $\mathrm{kurt}(\mathbf{w}^{\mathrm{H}}\mathbf{X})=\mathcal{C} \times_1 \mathbf{w} \times_2 \mathbf{w}^{*} \times_3 \mathbf{w} \times_4 \mathbf{w}^{*}$.
\end{theorem}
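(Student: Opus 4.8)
The plan is to unfold the definition of kurtosis directly and then collect the result into a fourth-order tensor using multilinearity. First I would write $\mathbf{y}=\mathbf{w}^{\mathrm H}\mathbf{X}$, so that its $l$-th entry is $y_l=\sum_{i=1}^{N}w_i^{*}X_{il}$, and use the pre-whitening hypothesis to reduce $\mathrm{kurt}(\mathbf{w}^{\mathrm H}\mathbf{X})$ to $\frac{1}{L}\sum_{l=1}^{L}|y_l|^4=\frac{1}{L}\sum_{l=1}^{L}y_l^{2}\,\overline{y_l}^{\,2}$. Substituting $y_l^{2}=\sum_{i,k}w_i^{*}w_k^{*}X_{il}X_{kl}$ and $\overline{y_l}^{\,2}=\sum_{j,m}w_j w_m X_{jl}^{*}X_{ml}^{*}$ produces a quadruple sum over $i,j,k,m$ of $w_i^{*}w_j w_k^{*}w_m$ times $X_{il}X_{kl}X_{jl}^{*}X_{ml}^{*}$; averaging over $l$ pushes the sample mean onto the data factors only, leaving the $w$'s outside.

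Next I would \emph{define} the entries of $\mathcal{C}$ to be exactly the sample fourth-order moments that appear, namely $\mathcal{C}_{ijkm}=\frac{1}{L}\sum_{l=1}^{L}X_{il}^{*}X_{jl}X_{kl}^{*}X_{ml}$, where the placement of the conjugations is forced by the $n$-mode product convention stated earlier (so that modes $1,3$ contract against $\mathbf{w}$ and modes $2,4$ against $\mathbf{w}^{*}$). After a relabelling of the dummy indices $i\leftrightarrow j$ and $k\leftrightarrow m$ to align the two expressions, and comparing term by term with the rule $(\mathcal{A}\times_n\mathbf{u})=\sum_{i_n}a_{\dots i_n\dots}u_{i_n}$, the quadruple sum is literally $\mathcal{C}\times_1\mathbf{w}\times_2\mathbf{w}^{*}\times_3\mathbf{w}\times_4\mathbf{w}^{*}$, which is the claimed identity. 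I would also record the structural symmetries that drop out for free: $\mathcal{C}$ is invariant under the swaps $1\leftrightarrow 3$ and $2\leftrightarrow 4$, and satisfies the Hermitian-type relation $\mathcal{C}_{jimk}=\overline{\mathcal{C}_{ijkm}}$; the latter makes the right-hand side manifestly real and non-negative, consistent with $\frac{1}{L}\sum_l|y_l|^4$ on the left.

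For the first assertion — that $\mathcal{C}$ \emph{completely determines} the fourth-order statistics of $\mathbf{X}$ — I would give a short two-sided argument. On one hand, each entry $\mathcal{C}_{ijkm}$ is itself a fourth-order sample moment of the rows of $\mathbf{X}$, so $\mathcal{C}$ is assembled purely from fourth-order information. On the other hand, $\mathrm{kurt}(\mathbf{w}^{\mathrm H}\mathbf{X})$ is a form that is quadratic in $\mathbf{w}$ and quadratic in $\overline{\mathbf{w}}$, so evaluating it over all $\mathbf{w}$ recovers, by polarization, the part of $\mathcal{C}$ symmetric under $1\leftrightarrow 3$ and $2\leftrightarrow 4$ — precisely the part that the kurtosis of linear projections can ever see. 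Hence nothing relevant is lost, and the correspondence between $\mathcal{C}$ and the fourth-order statistics is genuine rather than lossy.

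The computation is routine; the one place that requires care — and what I would treat as the main obstacle — is the conjugation bookkeeping. Getting the four conjugation patterns on the data factors and on the copies of $\mathbf{w}$ to match the $n$-mode convention exactly is what makes the identity hold on the nose rather than up to an index permutation or an overall conjugate. Once the convention fixing which modes carry $\mathbf{w}$ versus $\mathbf{w}^{*}$ is pinned down, the remaining steps are a mechanical expansion.
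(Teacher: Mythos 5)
Your proposal is correct and follows essentially the same route as the paper: expand $\tfrac1L\sum_l|\mathbf{w}^{\mathrm H}\mathbf{x}_l|^4$ into a quadruple sum, swap the order of summation so the sample average lands on the data factors, define the tensor entries as those fourth-order sample moments, and read off the $n$-mode product. Your conjugation convention places the conjugates on modes $1,3$ of the data factors rather than modes $2,4$ as in the paper (the two tensors are complex conjugates of one another and both satisfy the stated identity), and your added remarks on symmetry and polarization are consistent with, though slightly more than, what the paper records.
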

\begin{proof}
    It should be noted that $\mathrm{kurt}(\mathbf{w}^{\mathrm{H}}\mathbf{X})$ can be written as the following summation form
    \begin{equation}
        \begin{aligned}
            \mathrm{kurt}(\mathbf{w}^{\mathrm{H}}\mathbf{X}) = & \frac{1}{L}\sum_{j=1}^{L}|\mathbf{w}^{\mathrm{H}}\mathbf{x}_j|^4                                                                                                                                                                  \\
            =                                                  & \frac{1}{L}\sum_{j=1}^{L}\left(\mathbf{w}^{\mathrm{H}}\mathbf{x}_j\right)\left(\mathbf{w}^{\mathrm{H}}\mathbf{x}_j\right)^{*}\left(\mathbf{w}^{\mathrm{H}}\mathbf{x}_j\right)\left(\mathbf{w}^{\mathrm{H}}\mathbf{x}_j\right)^{*} \\
            =                                                  & \frac{1}{L}\sum_{j=1}^{L}\left(\sum_{i_1=1}^{N}x_{ji_1}w_{i_1}\right)\left(\sum_{i_2=1}^{N}x_{ji_2}w_{i_2}\right)^{*}\left(\sum_{i_3=1}^{N}x_{ji_3}w_{i_3}\right)\left(\sum_{i_4=1}^{N}x_{ji_4}w_{i_4}\right)^{*}
        \end{aligned}.
    \end{equation}
    By rearranging the summation sign, we have
    \begin{equation}
        \begin{aligned}
            \mathrm{kurt}(\mathbf{w}^{\mathrm{H}}\mathbf{X}) = & \frac{1}{L}\sum_{j=1}^{L}\sum_{i_1, i_2, i_3, i_4=1}^{N}x_{ji_1}x_{ji_2}^{*}x_{ji_3}x_{ji_4}^{*}w_{i_1}w_{i_2}^{*}w_{i_3}w_{i_4}^{*}            \\
            =                                                  & \sum_{i_1, i_2, i_3, i_4=1}^N\left(\frac{1}{L}\sum_{j=1}^{L}x_{ji_1}x_{ji_2}^{*}x_{ji_3}x_{ji_4}^{*}\right)w_{i_1}w_{i_2}^{*}w_{i_3}w_{i_4}^{*} \\
        \end{aligned}.
    \end{equation}
    Let \( c_{i_1 i_2 i_3 i_4} = \frac{1}{L}\sum_{j=1}^{L}x_{ji_1}x_{ji_2}^{*}x_{ji_3}x_{ji_4}^{*} \) be the $(i_1, i_2, i_3, i_4)$-th element of the tensor $\mathcal{C}$, then we can simplify the equation above by using the $n$-mode product, i.e.,
    \begin{equation}
        \begin{aligned}
            \mathrm{kurt}(\mathbf{w}^{\mathrm{H}}\mathbf{X}) = & \sum_{i_1, i_2, i_3, i_4=1}^{N}c_{i_1i_2i_3i_4}w_{i_1}w_{i_2}^{*}w_{i_3}w_{i_4}^{*}                                                                         \\
            =                                                  & \sum_{i_4=1}^{N}\left(\sum_{i_3=1}^{N}\left(\sum_{i_2=1}^{N}\left(\sum_{i_1=1}^{N}c_{i_1i_2i_3i_4}w_{i_1}\right)w_{i_2}^{*}\right)w_{i_3}\right)w_{i_4}^{*} \\
            =                                                  & \mathcal{C} \times_1 \mathbf{w} \times_2 \mathbf{w}^{*} \times_3 \mathbf{w} \times_4 \mathbf{w}^{*}
        \end{aligned}.
    \end{equation}
    Moreover, the tensor $\mathcal{C}$ is a symmetric tensor, whose elements satisfy the following equations
    \[
        \begin{aligned}
            c_{i_1i_2i_3i_4} & = c_{i_2i_1i_3i_4}^* = c_{i_3i_2i_1i_4} = c_{i_4i_2i_3i_1}^* \\
                             & = c_{i_1i_3i_2i_4}^* = c_{i_1i_4i_3i_2}                      \\
                             & = c_{i_1i_2i_4i_3}^*.
        \end{aligned}
    \]
\end{proof}

For convenience, denote $\mathcal{C} \times_1 \mathbf{w} \times_2 \mathbf{w}^{*} \times_3 \mathbf{w} \times_4 \mathbf{w}^{*}$ as $\mathcal{C}\mathbf{w}^4$, then the optimization problem can be formulated as
\begin{equation}\label{eq:kurtOptProblem}
    \begin{cases}
        \max        & \mathcal{C}\mathbf{w}^4             \\
        \text{s.t.} & \mathbf{w}^{\mathrm{H}}\mathbf{w}=1
    \end{cases}.
\end{equation}
Such a constrained optimization problem can be solved by the Lagrange multiplier method. We can construct the Lagrangian function for the optimization problem \cref{eq:kurtOptProblem} as follows
\[
    \mathcal{L} = \frac{1}{4}\mathcal{C}\mathbf{w}^4 - \frac{\lambda}{2}(\mathbf{w}^{\mathrm{H}}\mathbf{w}-1),
\]
where $\lambda$ is the Lagrangian coefficient. For the solution of the optimization problem, we have \cref{thm:eigenvector}.

\begin{theorem}\label{thm:eigenvector}
    The eigenvectors of the tensor $\mathcal{C}$ are the stationary points of the optimization problem \cref{eq:kurtOptProblem}.
\end{theorem}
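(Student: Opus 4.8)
The plan is to characterise the stationary points directly from the Lagrangian $\mathcal{L}=\frac{1}{4}\mathcal{C}\mathbf{w}^4-\frac{\lambda}{2}(\mathbf{w}^{\mathrm{H}}\mathbf{w}-1)$ already written above, treating the complex vector $\mathbf{w}$ and its conjugate $\mathbf{w}^{*}$ as independent variables in the sense of Wirtinger (CR-) calculus. Since $\mathcal{C}\mathbf{w}^4=\mathrm{kurt}(\mathbf{w}^{\mathrm{H}}\mathbf{X})$ is real valued, the first-order optimality condition for a constrained extremum reduces to the single equation $\partial\mathcal{L}/\partial\mathbf{w}^{*}=\mathbf{0}$, the companion equation $\partial\mathcal{L}/\partial\mathbf{w}=\mathbf{0}$ being merely its conjugate; so the whole proof comes down to evaluating this one gradient.

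First I would expand $\mathcal{C}\mathbf{w}^4=\sum_{i_1 i_2 i_3 i_4}c_{i_1 i_2 i_3 i_4}w_{i_1}w_{i_2}^{*}w_{i_3}w_{i_4}^{*}$ using the index form from the proof of \cref{thm:kurtTensor}, and differentiate with respect to $w_k^{*}$. Only the factors $w_{i_2}^{*}$ and $w_{i_4}^{*}$ depend on $w_k^{*}$, so the derivative is a sum of the two contractions obtained by fixing $i_2=k$ and by fixing $i_4=k$. The key step is to invoke the symmetry relation $c_{i_1 i_2 i_3 i_4}=c_{i_1 i_4 i_3 i_2}$ listed in \cref{thm:kurtTensor}: after relabelling a dummy index it shows these two contractions are identical, so $\partial(\mathcal{C}\mathbf{w}^4)/\partial\mathbf{w}^{*}=2\,\mathcal{C}\times_1\mathbf{w}\times_2\mathbf{w}^{*}\times_3\mathbf{w}$, a vector I will abbreviate $\mathcal{C}\mathbf{w}^3$. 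The constraint term is immediate: $\partial(\mathbf{w}^{\mathrm{H}}\mathbf{w})/\partial\mathbf{w}^{*}=\mathbf{w}$.

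Combining, $\partial\mathcal{L}/\partial\mathbf{w}^{*}=\frac{1}{2}\mathcal{C}\mathbf{w}^3-\frac{\lambda}{2}\mathbf{w}$, whence a unit vector $\mathbf{w}$ is a stationary point of \cref{eq:kurtOptProblem} if and only if $\mathcal{C}\mathbf{w}^3=\lambda\mathbf{w}$ for some scalar $\lambda$, which is precisely the defining equation of an eigenpair $(\lambda,\mathbf{w})$ of $\mathcal{C}$. This proves the statement, and in fact gives both directions of the equivalence; contracting the eigen-equation with $\mathbf{w}^{\mathrm{H}}$ also identifies the Lagrange multiplier as $\lambda=\mathcal{C}\mathbf{w}^4=\mathrm{kurt}(\mathbf{w}^{\mathrm{H}}\mathbf{X})$, which is worth recording.

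I expect the main obstacle to be presentational rather than substantive: making the Wirtinger differentiation clean, in particular justifying that for a real-valued objective the single condition $\partial\mathcal{L}/\partial\mathbf{w}^{*}=\mathbf{0}$ captures every stationary point, and keeping careful track of the four tensor modes so that the correct symmetry identity is the one used to merge the two derivative terms. A secondary point to state explicitly up front is the convention for "eigenvector of $\mathcal{C}$", namely a unit-norm solution of $\mathcal{C}\times_1\mathbf{w}\times_2\mathbf{w}^{*}\times_3\mathbf{w}=\lambda\mathbf{w}$; once that is fixed, the theorem follows directly from the gradient computation.
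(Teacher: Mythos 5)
Your proposal is correct and follows the same basic route as the paper: form the Lagrangian, differentiate $\mathcal{C}\mathbf{w}^4$ using the Hermitian symmetries of $\mathcal{C}$ recorded in \cref{thm:kurtTensor}, and read off the stationarity condition as a tensor eigen-equation. Your Wirtinger treatment is in fact tidier than the paper's: you need only the single symmetry $c_{i_1i_2i_3i_4}=c_{i_1i_4i_3i_2}$ to merge the two contractions produced by $\partial/\partial\mathbf{w}^{*}$, whereas the paper writes the derivative as a sum of four mode-contractions and then invokes the symmetries to show all four coincide; the resulting factor ($2$ versus $4$) is immaterial since it is absorbed into $\lambda$. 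One discrepancy is worth flagging explicitly. Your gradient is proportional to $\mathcal{C}\times_1\mathbf{w}\times_2\mathbf{w}^{*}\times_3\mathbf{w}$ (the fourth, conjugated, mode left free), while the paper defines $\mathcal{C}\mathbf{w}^3$ as $\operatorname{vec}\left(\mathcal{C}\times_2\mathbf{w}^{*}\times_3\mathbf{w}\times_4\mathbf{w}^{*}\right)$ (first mode left free); by the very symmetry relations the paper proves, these two vectors are complex conjugates of one another, not equal. Hence your stationarity condition reads $(\mathcal{C}\mathbf{w}^3)^{*}=\lambda\mathbf{w}$ in the paper's notation, whereas the paper asserts $\mathcal{C}\mathbf{w}^3=\lambda\mathbf{w}$, and for genuinely complex $\mathbf{w}$ these are different equations. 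Yours is the one that actually follows from $\partial\mathcal{L}/\partial\mathbf{w}^{*}=\mathbf{0}$ for the real objective and real constraint of \cref{eq:kurtOptProblem} (and it makes $\lambda=\mathcal{C}\mathbf{w}^4$ real, as a Lagrange multiplier for a single real constraint must be), so your instinct to fix the eigenvector convention up front is exactly right: the theorem is true only for the conjugation convention your gradient computation forces, and stating that convention is what closes the argument.
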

\begin{proof}
    It should be noted that \( \mathcal{C}\mathbf{w}^4  \) has the following four equivalent expressions:
    \[
        \begin{aligned}
            \mathcal{C}\mathbf{w}^4 & = (\mathcal{C} \times_2 \mathbf{w}^* \times_3 \mathbf{w} \times_4 \mathbf{w}^*) \times_1 \mathbf{w}  \\
                                    & = (\mathcal{C} \times_1 \mathbf{w} \times_3 \mathbf{w} \times_4 \mathbf{w}^*) \times_2 \mathbf{w}^*  \\
                                    & = (\mathcal{C} \times_1 \mathbf{w} \times_2 \mathbf{w}^* \times_4 \mathbf{w}^*) \times_3 \mathbf{w}  \\
                                    & = (\mathcal{C} \times_1 \mathbf{w} \times_2 \mathbf{w}^* \times_3 \mathbf{w}) \times_4 \mathbf{w}^*.
        \end{aligned}
    \]
    Therefore, the derivative of \( \mathcal{C}\mathbf{w}^4 \) with respect to \( \mathbf{w} \) is
    \begin{equation}
        \begin{split}
            \frac{\partial \mathcal{C}\mathbf{w}^4}{\partial \mathbf{w}} & = \operatorname{vec}\left( \mathcal{C}\times_2 \mathbf{w}^{*} \times_3 \mathbf{w} \times_4 \mathbf{w}^* \right) \\
                                                                         & + \operatorname{vec}\left( \mathcal{C}\times_1 \mathbf{w} \times_3 \mathbf{w} \times_4 \mathbf{w}^* \right)^*   \\
                                                                         & + \operatorname{vec}\left( \mathcal{C}\times_1 \mathbf{w} \times_2 \mathbf{w}^* \times_4 \mathbf{w}^* \right)   \\
                                                                         & + \operatorname{vec}\left( \mathcal{C}\times_1 \mathbf{w} \times_2 \mathbf{w}^* \times_3 \mathbf{w} \right)^*.
        \end{split}
    \end{equation}
    where the operator $\operatorname{vec}(\cdot)$ is defined as the vectorization of a tensor.
    Moreover, according to the symmetries of the tensor \( \mathcal{C} \), we have that the $j$-th element of the two vectors \( \operatorname{vec}\left( \mathcal{C}\times_2 \mathbf{w}^{*} \times_3 \mathbf{w} \times_4 \mathbf{w}^* \right) \) and \( \operatorname{vec}\left( \mathcal{C}\times_1 \mathbf{w} \times_3 \mathbf{w} \times_4 \mathbf{w}^* \right) \) has the following relationship
    \[
        \begin{split}
              & \operatorname{vec}\left( \mathcal{C}\times_2 \mathbf{w}^{*} \times_3 \mathbf{w} \times_4 \mathbf{w}^* \right)_j = \sum_{i_2, i_3, i_4}^{N} c_{ji_2i_3i_4}w_{i_2}^{*}w_{i_3}w_{i_4}^{*} \\
            = & \left( \sum_{i_2, i_3, i_4}^{N} c_{ji_2i_3i_4}^* w_{i_2}w_{i_3}^*w_{i_4} \right)^*   =  \left( \sum_{i_1, i_3, i_4}^{N} c_{i_1ji_3i_4} w_{i_1}w_{i_3} w_{i_4}^*\right)^*               \\
            = & \operatorname{vec}\left( \mathcal{C}\times_1 \mathbf{w} \times_3 \mathbf{w} \times_4 \mathbf{w}^* \right)_j^*.
        \end{split}
    \]
    Therefore, it can be concluded that
    \[
        \operatorname{vec}\left( \mathcal{C}\times_2 \mathbf{w}^{*} \times_3 \mathbf{w} \times_4 \mathbf{w}^* \right) = \operatorname{vec}\left( \mathcal{C}\times_1 \mathbf{w} \times_3 \mathbf{w} \times_4 \mathbf{w}^* \right)^*.
    \]
    Similarly, we have
    \[
        \begin{split}
            \operatorname{vec}\left( \mathcal{C}\times_2 \mathbf{w}^{*} \times_3 \mathbf{w} \times_4 \mathbf{w}^* \right) & = \operatorname{vec}\left( \mathcal{C}\times_1 \mathbf{w} \times_2 \mathbf{w}^* \times_4 \mathbf{w}^* \right), \\
            \operatorname{vec}\left( \mathcal{C}\times_2 \mathbf{w}^{*} \times_3 \mathbf{w} \times_4 \mathbf{w}^* \right) & = \operatorname{vec}\left( \mathcal{C}\times_1 \mathbf{w} \times_2 \mathbf{w}^* \times_3 \mathbf{w} \right)^*.
        \end{split}
    \]
    For simplicity, we denote \( \operatorname{vec}(\mathcal{C}\times_2 \mathbf{w}^{*} \times_3 \mathbf{w} \times_4 \mathbf{w}^* )\) as \( \mathcal{C}\mathbf{w}^3 \), then the derivative of \( \mathcal{C}\mathbf{w}^4 \) with respect to \( \mathbf{w} \) can be written as
    \[
        \frac{\partial \mathcal{C}\mathbf{w}^4}{\partial \mathbf{w}} = 4\mathcal{C}\mathbf{w}^3.
    \]
    Hence, the derivative of \( \mathcal{L} \) with respect to \( \mathbf{w} \) can be written as
    \[
        \frac{\partial \mathcal{L}}{\partial \mathbf{w}} = \mathcal{C}\mathbf{w}^3 - \lambda \mathbf{w}.
    \]
    Setting the derivative equal to zero, we can get the stationary point of the optimization problem \cref{eq:kurtOptProblem} as
    \[
        \mathcal{C}\mathbf{w}^3 = \lambda \mathbf{w}.
    \]
    Namely, $\mathbf{w}$ is the eigenvector of the tensor $\mathcal{C}$.
\end{proof}

From \cref{thm:eigenvector}, we can find that the eigenvectors of the tensor $\mathcal{C}$ can be simply obtained by the fixed-point iteration method, i.e.,
\begin{equation}
    \label{eq:orgFormula}
    \mathbf{w} \leftarrow \frac{\mathcal{C}\mathbf{w}^{3}}{\left\| \mathcal{C}\mathbf{w}^{3} \right\|}.
\end{equation}
Assume that \( \mathbf{W}_{k-1} = \begin{bmatrix} \mathbf{w}_1 & \mathbf{w}_2 & \cdots & \mathbf{w}_{k-1} \end{bmatrix} \) is the matrix consisting the first \( k-1 \) obtained unmixing vectors, then the iteration formula for the \( k \)-th unmixing vector can be written as
\begin{equation}\label{eq:iterFormula}
    \mathbf{w}_{k} \leftarrow \frac{\mathbf{P}_{\mathbf{W}_{k-1}} \mathcal{C}\mathbf{w}_{k}^{3}}{\left\| \mathbf{P}_{\mathbf{W}_{k-1}}\mathcal{C}\mathbf{w}_{k}^{3} \right\|}.
\end{equation}
where \( \mathbf{P}_{\mathbf{W}_{k-1}} = \mathbf{I} - \mathbf{W}_{k-1} \mathbf{W}_{k-1}^{\mathrm{T}} \) is the orthogonal complement projection matrix. However, as we have mentioned before, the eigenvectors of a high-order tensor could be unorthogonal, which means that the orthogonal constraint used in the above formula may result in a suboptimal solution. To overcome this problem, we propose a brand new non-zero volume constraint, which will be introduced in the next subsection.

\subsection{Non-zero volume constraint}

Traditional algorithms use orthogonal constraints to prevent convergence to the same point, e.g., FastICA performs orthogonalization after every iteration. Under the sources' independence assumption, these algorithms can obtain the exact solution because the unmixing vectors are orthogonal. However, in some particular cases, e.g., radar anti-interference, source signals are correlated which means unmixing vectors are nonorthogonal. In such cases, orthogonal constraints fail to get the exact solution. Some algorithms use nonorthogonal constraints, such as NPSA \cite{geng2020npsa}, but the author indicated that although the solutions are nonorthogonal, the solution is also imprecise. To overcome the problem, we propose a new nonorthogonal constraint, namely the non-zero volume constraint, to successively obtain the exact solution while maintaining the solution's uniqueness.

Note that the \( k \) unmixing vectors in \( \mathbf{W}_k \) form a vertex in an \( N \) dimensional space. Since these \( k \) vectors are different from each other, the volume of the vertex spanned by these vectors, whose value equals the square root of the determinant of the covariance matrix $\mathbf{R}_k = \mathbf{W}_k^{\mathrm{H}}\mathbf{W}_k$, is non-zero. Therefore, to prevent the \( k \)-th unmixing vector from converging to the same solution as the previous ones, we introduce the non-zero volume constraint, and the optimization problem for solving the \( k \)-th unmixing vector can be formulated as

\begin{equation}
    \label{eq:KurtOpt2}
    \begin{cases}
        \max        & \mathcal{C}\mathbf{w}_{k}^4                   \\
        \text{s.t.} & \mathbf{w}_{k}^{\mathrm{H}}\mathbf{w}_{k} = 1 \\
                    & \text{det}(\mathbf{R}_{k}) \neq 0
    \end{cases} .
\end{equation}
Notice that after introducing the non-zero volume constraint, it's hard to construct an iterative formula like \cref{eq:iterFormula} to solve the optimization problem. Fortunately, the optimization problem \cref{eq:KurtOpt2} still can be solved by the gradient ascent method. The iterative updating formula of the $k$-th unmixing vector $\mathbf{w}_k$ can be written as
\begin{equation}
    \label{eq:optFormula}
    \mathbf{w}_k \leftarrow \mathbf{w}_k + \alpha \frac{\mathcal{C}\mathbf{w}_k^3}{|\text{det}(\mathbf{R}_k)|}, \quad \mathbf{w}_k \leftarrow \frac{\mathbf{w}_k}{\lVert \mathbf{w}_k \rVert} ,
\end{equation}
where $\text{det}(\cdot)$ means the determinant of a matrix.

Furthermore, we have \cref{thm:equivalent} that ensures the equivalence of the stationary points corresponding to \cref{eq:iterFormula} and \cref{eq:optFormula}.
\begin{theorem}\label{thm:equivalent}
    The eigenvectors of a fourth-order tensor $\mathcal{C}$ are the stationary points under the gradient ascent iteration formula \cref{eq:optFormula}
\end{theorem}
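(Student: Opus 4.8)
The plan is to show that the unit-norm stationary points of the iteration defined by \cref{eq:optFormula} are exactly the unit-norm eigenvectors of $\mathcal{C}$, i.e. the vectors $\mathbf{w}$ with $\mathcal{C}\mathbf{w}^3=\lambda\mathbf{w}$ identified in \cref{thm:eigenvector}. Write the update as $T(\mathbf{w})=\mathbf{u}(\mathbf{w})/\lVert\mathbf{u}(\mathbf{w})\rVert$ with
\[
    \mathbf{u}(\mathbf{w}) = \mathbf{w} + \frac{\alpha}{\beta(\mathbf{w})}\,\mathcal{C}\mathbf{w}^3 , \qquad \beta(\mathbf{w}):=\lvert\det(\mathbf{R}_k)\rvert ,
\]
where in the iteration for $\mathbf{w}_k$ the earlier columns $\mathbf{w}_1,\dots,\mathbf{w}_{k-1}$ of $\mathbf{W}_k$ are held fixed. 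Since the non-zero volume constraint of \cref{eq:KurtOpt2} keeps $\mathbf{w}_k$ off the span of those vectors, $\beta(\mathbf{w})$ is a well-defined, strictly positive, finite scalar at every admissible point, so $T$ is well-defined there; and since only the direction of $\mathbf{w}_k$ matters after normalisation, an admissible $\hat{\mathbf{w}}$ with $\lVert\hat{\mathbf{w}}\rVert=1$ is stationary precisely when $\mathbf{u}(\hat{\mathbf{w}})$ is a (complex) scalar multiple of $\hat{\mathbf{w}}$. Thus the theorem reduces to characterising when $\mathcal{C}\hat{\mathbf{w}}^3\parallel\hat{\mathbf{w}}$.

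Next I would check the two directions. If $\mathbf{u}(\hat{\mathbf{w}})=\zeta\hat{\mathbf{w}}$ for some $\zeta\in\mathbb{C}$, then $\frac{\alpha}{\beta(\hat{\mathbf{w}})}\mathcal{C}\hat{\mathbf{w}}^3=(\zeta-1)\hat{\mathbf{w}}$, hence $\mathcal{C}\hat{\mathbf{w}}^3=\lambda\hat{\mathbf{w}}$ with $\lambda:=(\zeta-1)\beta(\hat{\mathbf{w}})/\alpha$, which by \cref{thm:eigenvector} is exactly the statement that $\hat{\mathbf{w}}$ is an eigenvector of $\mathcal{C}$ (the degenerate case $\mathcal{C}\hat{\mathbf{w}}^3=\mathbf{0}$ being the eigenvalue $\lambda=0$). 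Conversely, if $\hat{\mathbf{w}}$ is a unit eigenvector, $\mathcal{C}\hat{\mathbf{w}}^3=\lambda\hat{\mathbf{w}}$, then $\mathbf{u}(\hat{\mathbf{w}})=\bigl(1+\alpha\lambda/\beta(\hat{\mathbf{w}})\bigr)\hat{\mathbf{w}}$ is a scalar multiple of $\hat{\mathbf{w}}$, so $T(\hat{\mathbf{w}})$ is proportional to $\hat{\mathbf{w}}$ and $\hat{\mathbf{w}}$ is stationary; for real data one moreover has $\lambda=\mathcal{C}\hat{\mathbf{w}}^4=\mathrm{kurt}(\hat{\mathbf{w}}^{\mathrm{T}}\mathbf{X})\ge 0$ by \cref{thm:kurtTensor}, so the prefactor is a positive real and $T(\hat{\mathbf{w}})=\hat{\mathbf{w}}$ exactly. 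Combining the two directions proves \cref{thm:equivalent} and shows the volume-constrained gradient ascent has the same stationary set as the fixed-point scheme \cref{eq:iterFormula}, so no solutions are gained or lost by switching to it.

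The main obstacle is the bookkeeping around the normalisation together with the barrier factor $\lvert\det(\mathbf{R}_k)\rvert$: one must confirm that $\beta(\mathbf{w})$ stays strictly positive and finite throughout the admissible set, so that dividing by it introduces no spurious stationary points and removes no genuine ones --- this is precisely where the non-zero volume constraint is used. A second, more delicate point is special to the complex case: an eigenvector is determined only up to a unimodular phase, and replacing $\hat{\mathbf{w}}$ by $e^{\mathrm{i}\theta}\hat{\mathbf{w}}$ sends $\mathcal{C}\hat{\mathbf{w}}^3$ to $e^{-\mathrm{i}\theta}\mathcal{C}\hat{\mathbf{w}}^3$ and hence its eigenvalue $\lambda$ to $e^{-2\mathrm{i}\theta}\lambda$; so a generic phased representative is fixed by $T$ only as a direction, and an exact fixed point is recovered by choosing the representative with real nonnegative eigenvalue, which always exists by the same rescaling. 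I would therefore state \cref{thm:equivalent} as identifying stationary directions, note that the canonical nonnegative-eigenvalue representative is the exact fixed point, and otherwise leave the phase ambiguity implicit, as is customary in ICA.
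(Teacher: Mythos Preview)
Your proposal is correct and follows essentially the same two-direction verification as the paper: substitute an eigenvector into \cref{eq:optFormula} to see the update collapses to a scalar multiple of $\hat{\mathbf{w}}$, and conversely read off $\mathcal{C}\hat{\mathbf{w}}^3=\lambda\hat{\mathbf{w}}$ from the stationarity condition after absorbing $\lvert\det(\hat{\mathbf{R}})\rvert$ into the eigenvalue. Your extra care about the positivity of $\beta(\mathbf{w})$ and the complex phase ambiguity goes beyond what the paper makes explicit; the only slip is that the comparison in your final sentence should be with the unprojected fixed-point scheme \cref{eq:orgFormula} rather than \cref{eq:iterFormula}, since the latter carries the orthogonal projector $\mathbf{P}_{\mathbf{W}_{k-1}}$ and therefore does \emph{not} share the same stationary set in the nonorthogonal case.
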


\begin{proof}
    From \cref{eq:orgFormula}, it can be indicated that the stationary point \( \hat{\mathbf{w}} \) of the original target optimizing function statisfies the following equation
    \[
        \mathcal{C}\hat{\mathbf{w}}^{3} = \lambda \hat{\mathbf{w}},
    \]
    i.e., \( \hat{\mathbf{w}} \) is the eigenvector of the tensor \( \mathcal{C} \). Substituting the stationary point \( \hat{\mathbf{w}} \) into the updating formula \cref{eq:optFormula}, we have
    \begin{equation}\label{eq:proof1}
        \begin{split}
            \frac{\hat{\mathbf{w}} + \alpha \frac{\mathcal{C}\hat{\mathbf{w}}^3}{\left|\text{det}(\hat{\mathbf{R}})\right|}}{\left\| \hat{\mathbf{w}} + \alpha \frac{\mathcal{C}\hat{\mathbf{w}}^3}{\left|\text{det}(\hat{\mathbf{R}})\right|} \right\|}
            = \frac{\hat{\mathbf{w}} + \alpha \frac{\lambda \hat{\mathbf{w}}}{\left|\text{det}(\hat{\mathbf{R}})\right|}}{\left\| \hat{\mathbf{w}} + \alpha \frac{\lambda \hat{\mathbf{w}}}{\left|\text{det}(\hat{\mathbf{R}})\right|} \right\|}
            = \frac{\left( 1 + \frac{\alpha \lambda}{\left|\det \left( \hat{\mathbf{R}} \right)\right|} \right) \hat{\mathbf{w}}}{\left( 1 + \frac{\alpha \lambda}{\left|\det \left( \hat{\mathbf{R}} \right)\right|} \right) \left\| \hat{\mathbf{w}} \right\|} = \hat{\mathbf{w}},
        \end{split}
    \end{equation}
    where \( \hat{\mathbf{R}} \) is the covariance matrix of \( \hat{\mathbf{w}} \) and the previous obtained unmixing vectors. From \cref{eq:proof1}, it can be found that the stationary point \( \hat{\mathbf{w}} \) of the original target optimizing function is the stationary point of the updating formula \cref{eq:optFormula}.

    Similarly, it can be indicated that the stationary point $\hat{\mathbf{w}}$ of \cref{eq:optFormula} satisfies
    \[
        \frac{\mathcal{C}\hat{\mathbf{w}}^3}{\left|\det\left( \hat{\mathbf{R}} \right)\right|} = \lambda \hat{\mathbf{w}}.
    \]
    Multiplying $\left|\det\left( \hat{\mathbf{R}} \right)\right|$ on the both sides, it can be formulated as
    \[
        \mathcal{C}\hat{\mathbf{w}}^3 = \lambda\left|\det\left( \hat{\mathbf{R}} \right)\right| \hat{\mathbf{w}}.
    \]
    Since $\lambda$ is a constant, the constant item $\left|\det\left( \hat{\mathbf{R}} \right)\right|$ can be absorbed into $\lambda$, i.e.,
    \begin{equation}
        \mathcal{C}\hat{\mathbf{w}}^3 = \lambda\hat{\mathbf{w}}.
    \end{equation}
    Substituting the equation above into \cref{eq:orgFormula}, we have
    \begin{equation}\label{eq:proof2}
        \frac{\mathcal{C}\hat{\mathbf{w}}^{3}}{\left\| \mathcal{C}\hat{\mathbf{w}}^{3} \right\|} = \frac{\lambda \hat{\mathbf{w}}}{\left\| \lambda \hat{\mathbf{w}} \right\|} = \hat{\mathbf{w}}.
    \end{equation}
    From \cref{eq:proof2}, it can be seen that the stationary point of \cref{eq:optFormula} is also the stationary point of \cref{eq:orgFormula}. Therefore, the stationary points corresponding to the two formulas are equivalent.
\end{proof}

According to \cref{eq:optFormula}, it can be found that \( \mathbf{w}_k \) will not converge to the same solution as the previous ones, which guarantees the uniqueness of the solution. Meanwhile, \cref{thm:equivalent} ensures that the solution obtained by the non-zero volume constraint is an eigenvector of the tensor $\mathcal{C}$. Hence, it can be concluded that the non-zero volume constraint is a more suitable constraint than the orthogonal constraint for the ICA problem when the source signals are correlated.

\subsection{Riemann gradient}

Although the introduction of the non-zero volume constraint ensures the uniqueness of the solution, it could result in a slow convergence speed. To better illustrate the difference between the normal gradient and the Riemann gradient, we denote $\mathbf{w}_k^{(t)}$ as the $t$-th iteration result of $\mathbf{w}_k$. If $\mathbf{w}_k^{(t)}$ is close to one of the previous unmixing vectors $\mathbf{w}_l$, the determinant of $\mathbf{R}_k$ tends towards $0$, which makes the value of $\frac{1}{|\text{det}(\mathbf{R}_k)|}$ very huge. Hence, in this case, we have that
\[
    \mathbf{w}_k^{(t)} + \alpha \frac{\mathcal{C}\mathbf{w}_k^{(t)^3}}{|\text{det}(\mathbf{R}_k)|} \approx \alpha \frac{\mathcal{C}\mathbf{w}_k^{(t)^3}}{|\text{det}(\mathbf{R}_k)|}.
\]
Therefore, after normalization, $\mathbf{w}_k^{(t+1)}$ can be written as
\[
    \mathbf{w}_k^{(t+1)} \approx \frac{\mathcal{C}\mathbf{w}_k^{(t)^{3}}}{\left\| \mathcal{C}\mathbf{w}_k^{(t)^{3}} \right\|},
\]
which is exactly the same updating formula as \cref{eq:orgFormula}. Since \( \mathbf{w}_k^{(t)} \) is close to a previously obtained unmixing vector \( \mathbf{w}_l \), it can be indicated that under the updating formula of \cref{eq:optFormula}, the converged solution of \( \mathbf{w}_k \) will be \( \mathbf{w}_l \). However, due to the non-zero volume constraint, \( \mathbf{w}_k \) will not converge into \( \mathbf{w}_l \). Thus, in this case, \( \mathbf{w}_k \) will gradually approach \( \mathbf{w}_l \), but never achieve. As can be seen in \cref{fig:normalUpdating}, when \( \mathbf{w}_k^{(t)} \) is close to one of the previous unmixing vectors, the gradient of \( \mathbf{w}_k^{(t)} \) will be almost parallel to \( \mathbf{w}_k^{(t)} \), which will cause excessive gradient values along \( \mathbf{w}_k^{(t)} \). Meanwhile, the direction perpendicular to \( \mathbf{w}_k^{(t)} \) is very small, which is actually the valid part of the gradient. After normalization, the valid part of the gradient of \( \mathbf{w}_k^{(t)} \) will be further reduced, which leads to a slow convergence speed.

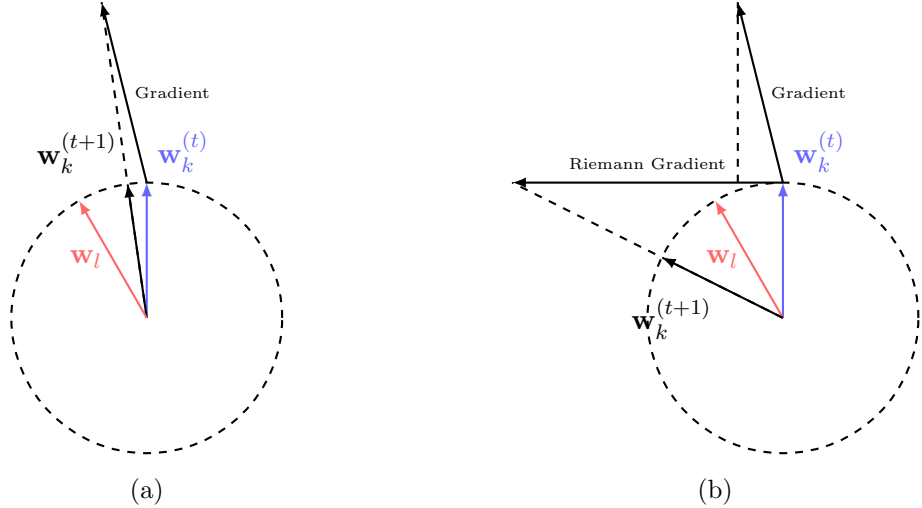
\begin{figure}[htb!]
    \centering
    \begin{subfigure}{.45\textwidth}
        \centering
        \begin{tikzpicture}[scale=0.6]
            \draw[thick, dashed] (0,0) circle (3);
            \draw[thick, dashed] (0,0) -- (-1, 7);
            \draw[thick, -latex, blue!60!white] (0,0) -- (0,3) node[above right] {$\mathbf{w}_k^{(t)}$};
            \draw[thick, -latex] (0,3) -- (-1, 7) node[midway,right] {\tiny Gradient};
            \draw[thick, -latex] (0,0) -- (-1/2.35, 7/2.35) node[above left] {$\mathbf{w}_k^{(t+1)}$};
            \draw[thick, -latex, red!60!white] (0,0) -- (-0.5*3, 0.866*3) node[midway, left] {$\mathbf{w}_l$};
        \end{tikzpicture}
        \caption{}
        \label{fig:normalUpdating}
    \end{subfigure}
    \begin{subfigure}{.45\textwidth}
        \centering
        \begin{tikzpicture}[scale=0.6]
            \draw[thick, dashed] (0,0) circle (3);
            \draw[thick, dashed] (0,0) -- (-6, 3);
            \draw[thick, -latex, blue!60!white] (0,0) -- (0,3) node[above right] {$\mathbf{w}_k^{(t)}$};
            \draw[thick, -latex] (0,3) -- (-1, 7) node[midway,right] {\tiny Gradient};
            \draw[thick, -latex] (0,3) -- (-6, 3) node[midway,above=1pt] {\tiny Riemann Gradient};
            \draw[thick, dashed] (-1,3) -- (-1, 7);
            \draw[thick, -latex, red!60!white] (0,0) -- (-0.5*3, 0.866*3) node[midway, left] {$\mathbf{w}_l$};
            \draw[thick, -latex] (0,0) -- (-6*0.447, 3*0.447) node[midway,below left] {$\mathbf{w}_k^{(t+1)}$};
        \end{tikzpicture}
        \caption{}
        \label{fig:RiemannUpdating}
    \end{subfigure}
    \caption{A simple illustration of the difference between the normal gradient and the Riemann gradient when $\mathbf{w}_k^{(t)}$ is close to a previously obtained unmixing vector  $\mathbf{w}_l$. (a) normal gradient (b) Riemann gradient}
    \label{fig_diff}
\end{figure}

To overcome this problem, we introduce the Riemann gradient. Noting that the optimization problem \cref{eq:KurtOpt2} is constrained on the Riemann manifold $\mathbf{w}_k^{\mathrm{H}}\mathbf{w}_k = 1$, hence, we can construct a projection operator mapping the tangent space at $\mathcal{C}\mathbf{w}_k^3$ to the manifold \cite{bonnabel2013stochastic}. Moreover, in \cite{bonnabel2013stochastic}, the author demonstrated that the gradient ascent method on such a Riemann manifold is both convergent and effective, as it can effectively mitigate the influence of irrelevant gradient components. Therefore, replacing the gradient $\mathcal{C}\mathbf{w}_k^3$ by the Riemann gradient $(\mathbf{I} - \mathbf{w}_k\mathbf{w}_k^{\mathrm{H}})\mathcal{C}\mathbf{w}_k^3$ will help $\mathbf{w}_k$ converge to the stationary point in the subspace and speed up the convergence process.

Combining the non-zero volume constraint and Riemann gradient, the revised updating formula of $\mathbf{w}_k$ can be written as
\begin{equation}
    \mathbf{w}_k \leftarrow \mathbf{w}_k + \alpha \frac{(\mathbf{I} - \mathbf{w}_k\mathbf{w}_k^{\mathrm{H}})\mathcal{C}\mathbf{w}_k^3}{|\text{det}(\mathbf{R}_k)|} ,\quad \mathbf{w}_k \leftarrow \frac{\mathbf{w}_k}{\lVert \mathbf{w}_k \rVert} .
\end{equation}

The difference between the normal gradient and the Riemann gradient can be seen in \cref{fig:RiemannUpdating}. Via Riemann gradient, the gradient component along $\mathbf{w}_k^{(t)}$ is suppressed, hence, the updating process can be more efficient. When $\mathbf{w}_k^{(t)}$ is close to $\mathbf{w}_l$, $\frac{1}{|\det(\mathbf{R}_k)|}$ could be very large and the following statement holds
\begin{equation}
    \mathbf{w}_k^{(t)} + \alpha \frac{(\mathbf{I} - \mathbf{w}_k^{(t)}\mathbf{w}_k^{{(t)}^{\mathrm{H}}})\mathcal{C}\mathbf{w}_k^{{(t)}^3}}{|\text{det}(\mathbf{R}_k)|} \approx \alpha \frac{(\mathbf{I} - \mathbf{w}_k^{(t)}\mathbf{w}_k^{{(t)}^{\mathrm{H}}})\mathcal{C}\mathbf{w}_k^{{(t)}^3}}{|\text{det}(\mathbf{R}_k)|}.
\end{equation}
After normalization, $\mathbf{w}_k^{(t+1)}$ can be formulated as
\begin{equation}
    \label{eq:RiemannApprox}
    \mathbf{w}_k^{(t+1)} \approx \frac{(\mathbf{I} - \mathbf{w}_k^{(t)}\mathbf{w}_k^{(t)^{\mathrm{H}}})\mathcal{C}\mathbf{w}_k^{(t)^3}}{\lVert (\mathbf{I} - \mathbf{w}_k^{(t)}\mathbf{w}_k^{(t)^{\mathrm{H}}})\mathcal{C}\mathbf{w}_k^{(t)^3} \rVert}.
\end{equation}
According to \cref{eq:RiemannApprox}, it can be seen that $\mathbf{w}_k^{(t+1)}$ will be mainly determined by the Riemann gradient. Moreover, since the unrelated component of the gradient, which is along $\mathbf{w}_k^{(t)}$, has been eliminated, the perpendicular component will be prominent. Therefore, $\mathbf{w}_k^{(t+1)}$ will be approximately perpendicular to $\mathbf{w}_k^{(t)}$, allowing it to stand out from the vicinity of $\mathbf{w}_l$.

In summary, the pseudocode of our method is shown in \cref{alg:1}.

\begin{algorithm}
    \caption{Principal Kurtosis Analysis}
    \label{alg:1}
    \begin{algorithmic}
        \REQUIRE the fourth-order statistics tensor $\mathcal{C}$, the number of source signals $N$
        \ENSURE unmixing matrix $\mathbf{W}$

        \WHILE {$k$ = 1 to $N$}

        \STATE Initialize $\mathbf{w}_k$ with random unit vector.

        \WHILE {stop conditions are not met}
        \STATE $\mathbf{W}_k = \begin{bmatrix} \mathbf{w}_1 & \mathbf{w}_2 & ... & \mathbf{w}_{k} \end{bmatrix}$,
        \STATE $\mathbf{R}_{k} = \mathbf{W}_k^{\mathrm{H}} \mathbf{W}_k$,

        \STATE $\mathbf{w}_k \leftarrow \mathbf{w}_k + \alpha \frac{(\mathbf{I} - \mathbf{w}_k\mathbf{w}_k^{\mathrm{H}})\mathcal{C}\mathbf{w}_k^3}{|\text{det}(\mathbf{R}_k)|}$,

        \STATE $\mathbf{w}_k \leftarrow \frac{\mathbf{w}_k}{\lVert \mathbf{w}_k \rVert}$,
        \ENDWHILE
        \ENDWHILE

        \STATE All unmixing vectors form the unmixing matrix $\mathbf{W} = \begin{bmatrix} \mathbf{w}_1 & \mathbf{w}_2 & ... & \mathbf{w}_{N} \end{bmatrix}$.
    \end{algorithmic}
\end{algorithm}

\section{Experiments}

\nocite{kitamura2016determined, kitamura2018determined, kitamura2020consistent, sawada2023multi}

In this section, we first conduct a validation experiment to verify the effectiveness of the proposed PKA algorithm in solving the eigenpairs of the fourth-order statistical tensor. Subsequently, we perform experiments using a variety of data types, including basic waves, sound recordings, and radar signals, to assess the overall effectiveness of our proposed method. Moreover, five commonly used algorithms are selected as the comparison algorithms, including JADE \cite{cardoso1993blind}, cFastICA \cite{Bingham2000A}, PSA \cite{geng2014principal}, NPSA \cite{geng2020npsa}, and RPSA \cite{geng4438906rpsa}. As the data matrix will be whitened before performing BSS, unmixing vectors are nonorthogonal only when source signals are nonorthogonal, hence, to test performance in scenarios involving nonorthogonal unmixing vectors, it suffices to utilize nonorthogonal source signals. In the first experiment, we validate whether the algorithms could solve the eigenpairs correctly. In the second experiment, we aim to demonstrate the capability of our method in separating nonorthogonal signals. In the third experiment, we evaluate the performance of the algorithms in the sound data separation task. In the fourth experiment, we evaluate the suppression performance of the algorithms in the radar interference-suppressing task.

It should be noted that PSA and NPSA only process real signals, therefore, when dealing with complex signals, only the real part of the signal is used. Regarding the experimental setup, cFastICA uses kurtosis as the contract function, RPSA employs the default settings and PKA employs a learning rate of $\alpha = 1 \times 10^{-2}$. Meanwhile, the other algorithms do not require any parameter adjustments. Considering that in real-world scenarios, the unmixed signals with minimal kurtosis are likely to be the source signals, it would be more appropriate to use the gradient descent method rather than the gradient ascent method for PKA in this situation. Moreover, unlike JADE, the outcomes of cFastICA, PSA, NPSA, RPSA, and PKA, are influenced by the initial starting points. To avoid some extreme situations, when the methods exhibit obvious anomalies, we will try to reinitialize them to obtain a stable result. All algorithms are implemented in MATLAB R2022b on a laptop with 16GB RAM, Intel(R) Core (TM) i7-11800H CPU, @2.30GHz.

\subsection{Validation Experiment}

In this experiment, we construct a series of random fourth-order statistical tensors, and then use the mentioned algorithms to solve the eigenvectors. Let \( \mathbf{w} \) be an estimated eigenvector, and \( \mathcal{C} \) be the fourth-order statistical tensor, the following cosine similarity is used to evaluate whether \( \bm{w} \) is an eigenvector of \( \mathcal{C} \) or not
\begin{equation}\label{eq:cosine_similarity}
    s(\mathbf{w}) = \frac{\mathbf{w}^{\mathrm{H}}(\mathcal{C}\mathbf{w}^3)}{\lVert \mathbf{w} \rVert \lVert \mathcal{C}\mathbf{w}^3 \rVert} = \frac{\mathcal{C}\mathbf{w}^4}{\lVert \mathcal{C}\mathbf{w}^3 \rVert}.
\end{equation}
It can be found from \cref{eq:cosine_similarity} that \( s(\mathbf{w}) \) calculate the cosine of the angle between \( \mathbf{w} \) and \( \mathcal{C}\mathbf{w}^3 \). When \( s(\mathbf{w}) = 1 \),  it indicates that \( \mathcal{C} \mathbf{w}^3 = \lambda \mathbf{w} \), therefore, \( \mathbf{w} \) is exactly an eigenvector of \( \mathcal{C} \). When \( s(\mathbf{w}) \) equals 0, it means that \( \mathcal{C} \mathbf{w}^3 \) and \( \mathbf{w} \) are orthogonal to each other, hence, \( \mathbf{w} \) is not an eigenvector of \( \mathcal{C} \). Generally, the closer \( s(\mathbf{w}) \) is to 1, the more accurate the eigenvector \( \mathbf{w} \) is.

A total of 100 random fourth-order statistical tensors with the size of \( 3 \times 3 \times 3 \times 3 \) are generated, resulting in 300 eigenvectors to be estimated. We select a threshold between 0 and 1, and consider the eigenvector \( \mathbf{w} \) to be successfully found when \( s(\mathbf{w}) \) is greater than the threshold. By varying the threshold, a curve can be plotted to show the number of successfully found eigenvectors. The results of this experiment are shown in \cref{fig:EigenVectorSuccess}. It should be noted that PSA, NPSA, and RPSA are not included in this experiment as they are not designed for solving the eigenvectors of the fourth-order statistical tensor.

\begin{figure}[htbp!]
    \centering
    \includegraphics[width=.4\textwidth]{./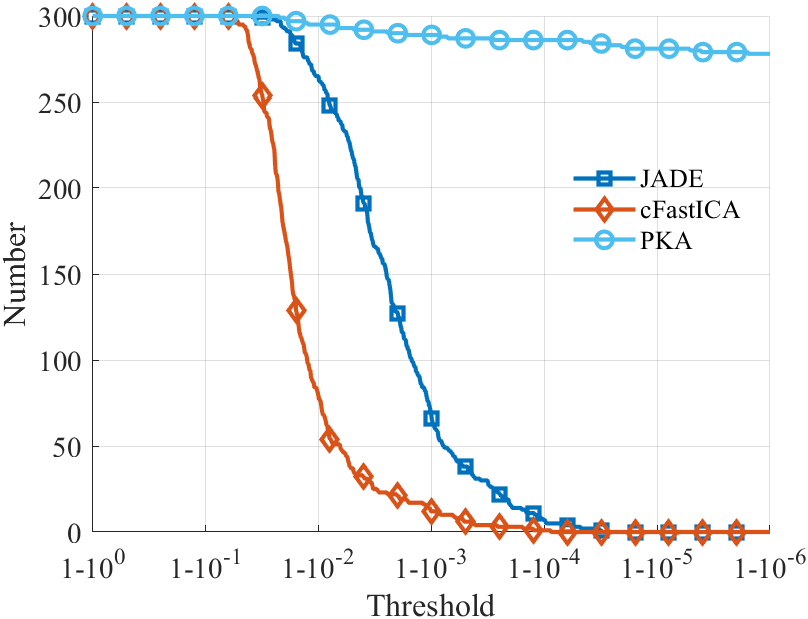}
    \caption{The variation of the number of successfully found eigenvectors with the threshold.}
    \label{fig:EigenVectorSuccess}
\end{figure}

From \cref{fig:EigenVectorSuccess}, it can be found as the threshold becomes closer to 1, the number of successfully found eigenvectors decreases for all algorithms. However, it can be observed that the curve of JADE and cFastICA rapidly drops to zero when the threshold is greater than \( 1-10^{-2} \). The reason is that the orthogonal constraint is imposed in JADE and cFastICA, which makes them unable to find the accurate solutions of the eigenvectors. In contrast, the curve of PKA is relatively stable, and most of the eigenvectors can be successfully found even when the threshold is greater than \( 1-10^{-6} \). Considering numerical errors, it can be concluded that the proposed PKA algorithm is able to solve the accurate eigenvectors of the fourth-order statistical tensor effectively.

\subsection{Experiment on Basic Waves}

In this experiment, we first generate a set of nonorthogonal basic waves and mix them with a random mixing matrix $\mathbf{A}$, then the mentioned algorithms are executed to separate the mixed signals. The source signals can be seen in \cref{fig:sine_wave_source}, where the first two signals are sine waves with frequencies 9Hz and 9.5Hz, respectively, and the third signal is a square wave with a frequency of 8Hz. Further, the randomly mixed signals can be seen in \cref{fig:sine_wave_mixed}. The covariance matrix of the source signals is
\[
    \begin{bmatrix}
        1    & 0.64 & 0.03 \\
        0.64 & 1    & 0.11 \\
        0.03 & 0.11 & 1
    \end{bmatrix},
\]
from which we can find that all three source signals are nonorthogonal to each other.

\begin{figure}[htb!]
    \centering
    \begin{subfigure}{.3\textwidth}
        \centering
        \includegraphics[width=\textwidth]{./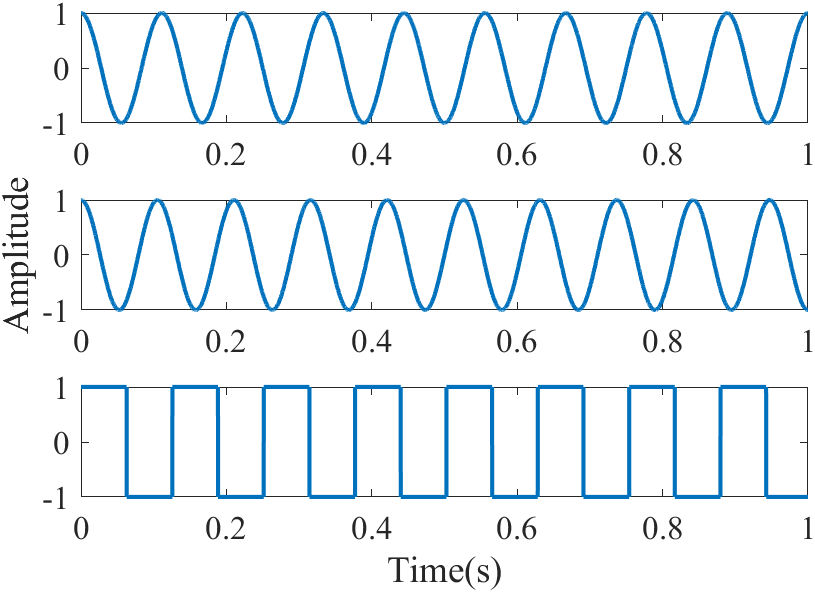}
        \caption{}
        \label{fig:sine_wave_source}
    \end{subfigure}
    \begin{subfigure}{.3\textwidth}
        \centering
        \includegraphics[width=\textwidth]{./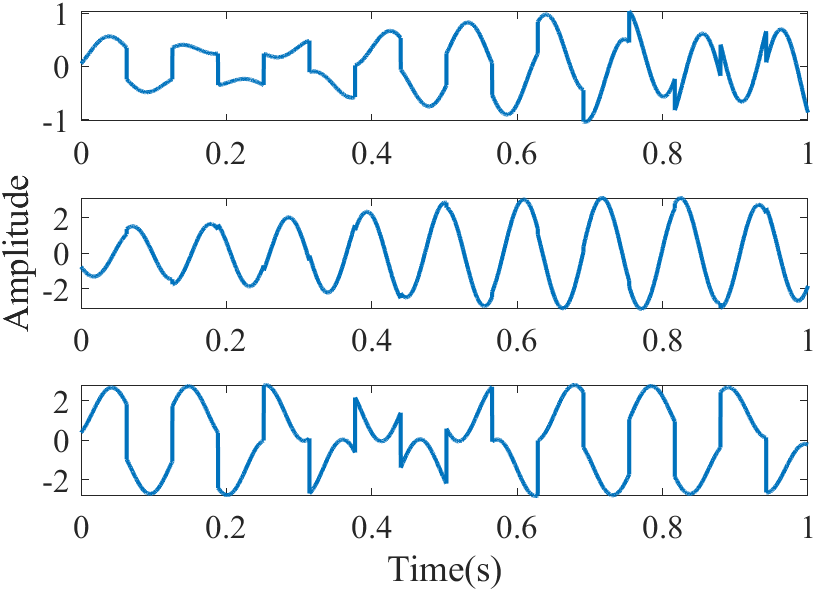}
        \caption{}
        \label{fig:sine_wave_mixed}
    \end{subfigure}

    \begin{subfigure}{.3\textwidth}
        \centering
        \includegraphics[width=\textwidth]{./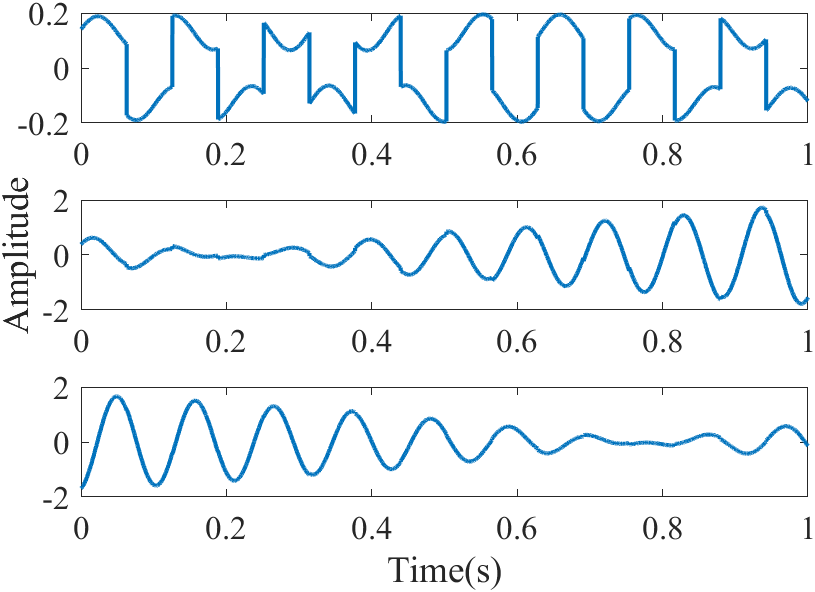}
        \caption{}
        \label{fig:sine_wave_JADE}
    \end{subfigure}
    \begin{subfigure}{.3\textwidth}
        \centering
        \includegraphics[width=\textwidth]{./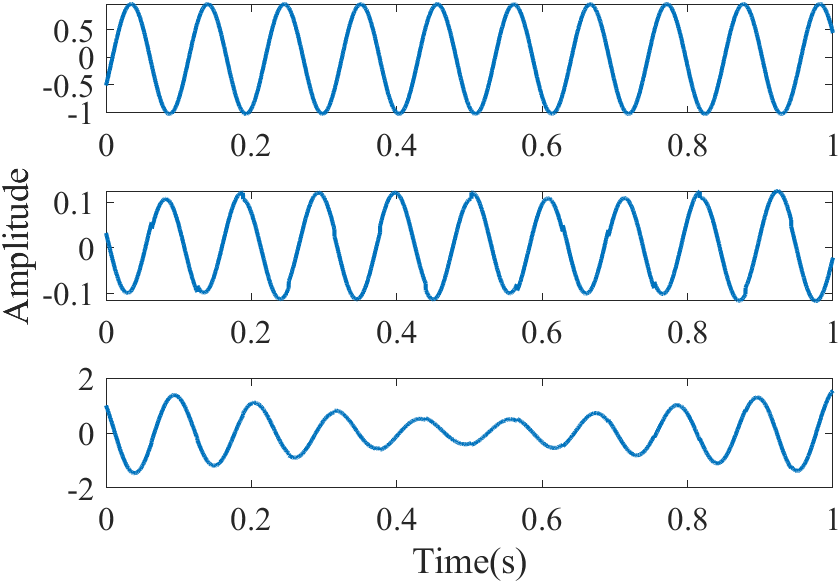}
        \caption{}
        \label{fig:sine_wave_cFastICA}
    \end{subfigure}
    \begin{subfigure}{.3\textwidth}
        \centering
        \includegraphics[width=\textwidth]{./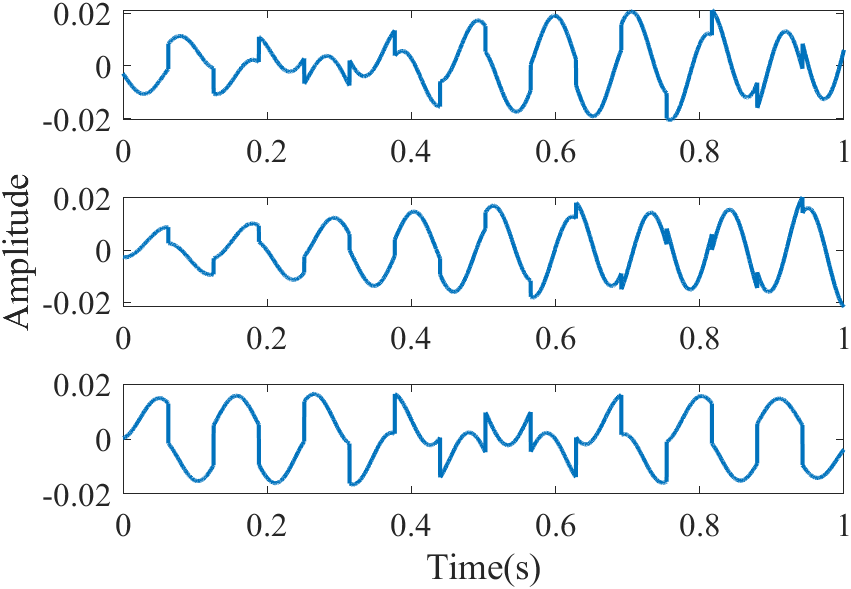}
        \caption{}
        \label{fig:sine_wave_PSA}
    \end{subfigure}
    \begin{subfigure}{.3\textwidth}
        \centering
        \includegraphics[width=\textwidth]{./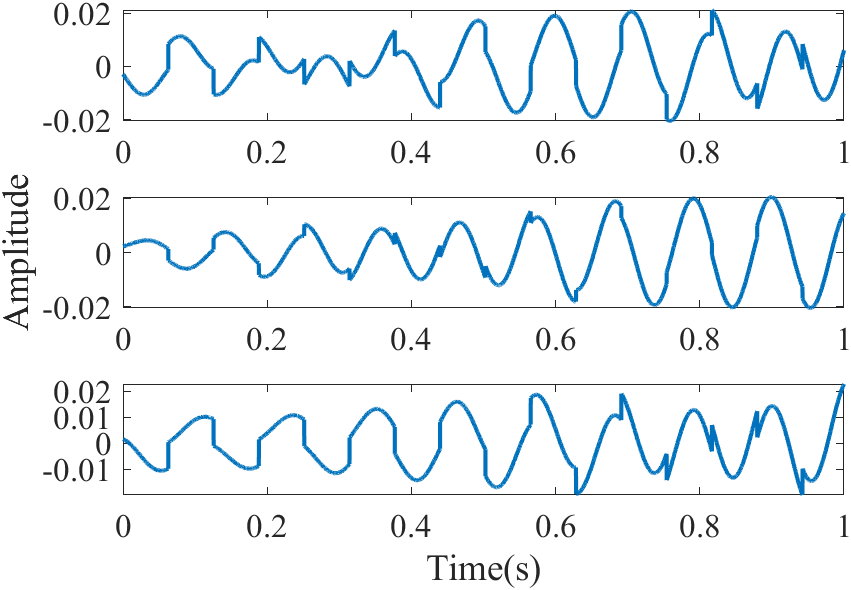}
        \caption{}
        \label{fig:sine_wave_NPSA}
    \end{subfigure}
    \begin{subfigure}{.3\textwidth}
        \centering
        \includegraphics[width=\textwidth]{./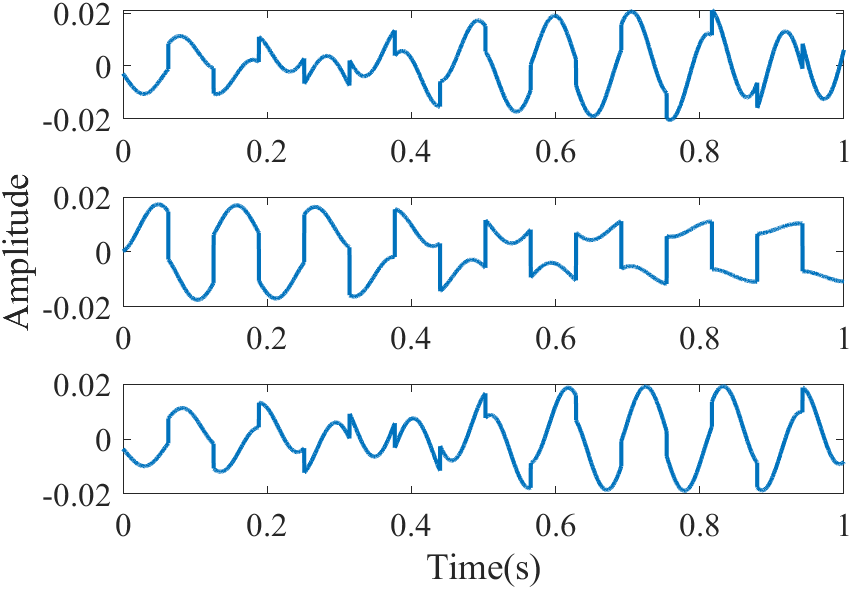}
        \caption{}
        \label{fig:sine_wave_RPSA}
    \end{subfigure}
    \begin{subfigure}{.3\textwidth}
        \centering
        \includegraphics[width=\textwidth]{./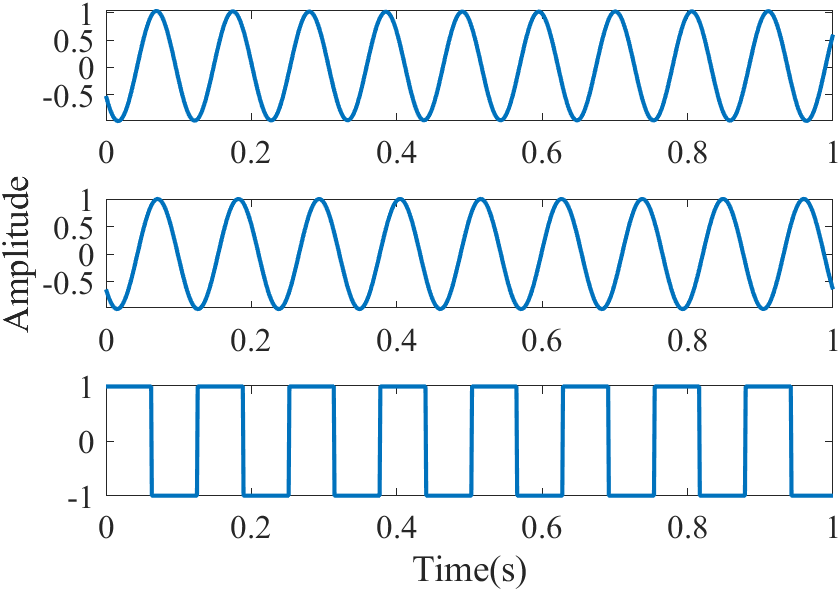}
        \caption{}
        \label{fig:sine_wave_PKA}
    \end{subfigure}
    \caption{The results of the basic waves separation experiment (only the real part of the signals is plotted). (a) source signals (b) mixed signals (c) separation results of JADE (d) separation results of cFastICA (e) separation results of PSA (f) separation results of NPSA (g) separation results of RPSA (h) separation results of PKA}
    \label{fig:simulationExp}
\end{figure}

The separation results of all algorithms can be seen in \cref{fig:simulationExp}. From the results, it can be found that only PKA can separate the signals correctly. For JADE, cFastICA, and PSA, they suppose the source signals are orthogonal to each other, hence, an orthogonal constraint is adopted, which makes them unable to separate nonorthogonal signals. Furthermore, PSA, NPSA, and RPSA use skewness as the non-Gaussianity metric, which naturally hinders their ability to separate symmetrically distributed signals, since the skewness of any symmetric distributed signals is zero. In addition to the visual inspection of the separation results, we also compute some evaluation metrics, including the Intersymbol-interference (ISI) \cite{moreau1994one}, Average Correlation Coefficient (ACC), and the Signal-to-Distortion Ratio (SDR) \cite{vincent2007first, sawada2013multichannel} first to quantify the performance of the algorithms, and the results are shown in \cref{tab:simulationTab}. From \cref{tab:simulationTab}, it can be found that PKA has the best performance in all metrics, which further confirms the superiority of PKA in nonorthogonal signal separation tasks.

\begin{table}[htbp!]
    \centering
    \caption{Basic wave separation performance of all algorithms.}
    \label{tab:simulationTab}
    \begin{tabular}{c|c|c|c|c|c|c}
        \hline
        \text{Index} & \text{JADE} & \text{cFastICA} & \text{PSA} & \text{NPSA} & \text{RPSA} & \text{PKA}  \\ \hline \hline
        \text{ISI}   & 3.9756      & 1.1917          & 3.4394     & 2.8562      & 2.8164      & \bf{0.0316} \\
        \text{ACC}   & 0.8054      & 0.9181          & 0.6372     & 0.5692      & 0.5382      & \bf{0.9980} \\
        \text{SDR}   & 8.065       & 23.99           & -1.464     & -3.14       & -3.901      & \bf{44.16}  \\ \hline
    \end{tabular}
\end{table}

\subsection{Experiment on Sound Data}

In this experiment, we delve into the application of the discussed algorithms on real sound data. Specifically, we select four segments (as can be seen in \cref{fig:soundImg}.) from the ``LibriSpeech ASR corpus'' dataset \cite{panayotov2015librispeech}, a comprehensive collection of approximately 1000 hours of 16kHz English speech recordings. These segments are then combined using a random mixing matrix $\mathbf{A}$, setting the stage for our signal separation experiment.

\begin{figure}[htb!]
    \centering
    \begin{subfigure}{.23\textwidth}
        \centering
        \includegraphics[width=\textwidth]{./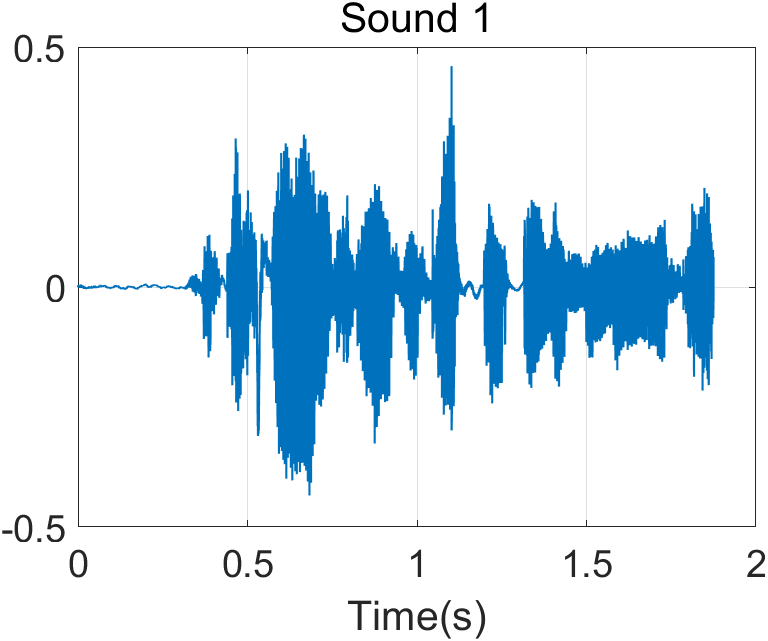}
        \caption{}
        \label{fig:soundImg1}
    \end{subfigure}
    \begin{subfigure}{.23\textwidth}
        \centering
        \includegraphics[width=\textwidth]{./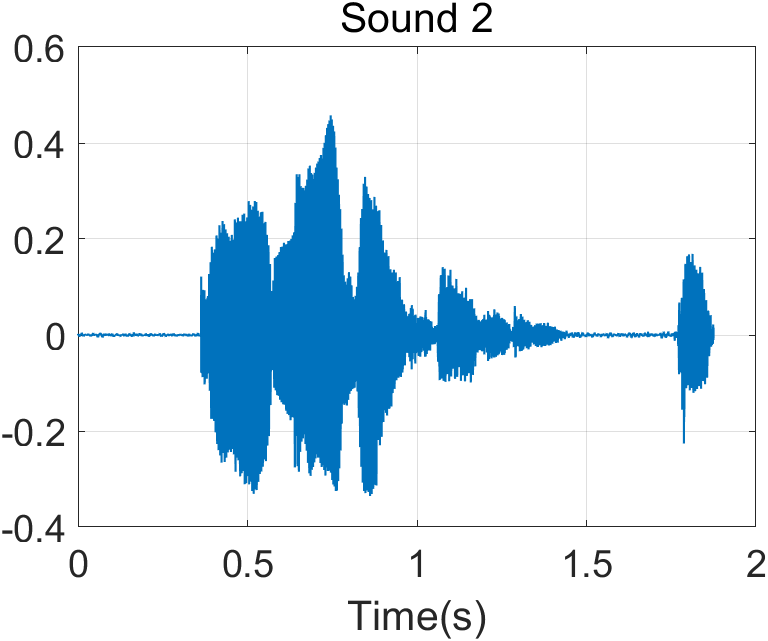}
        \caption{}
        \label{fig:soundImg2}
    \end{subfigure}
    \begin{subfigure}{.23\textwidth}
        \centering
        \includegraphics[width=\textwidth]{./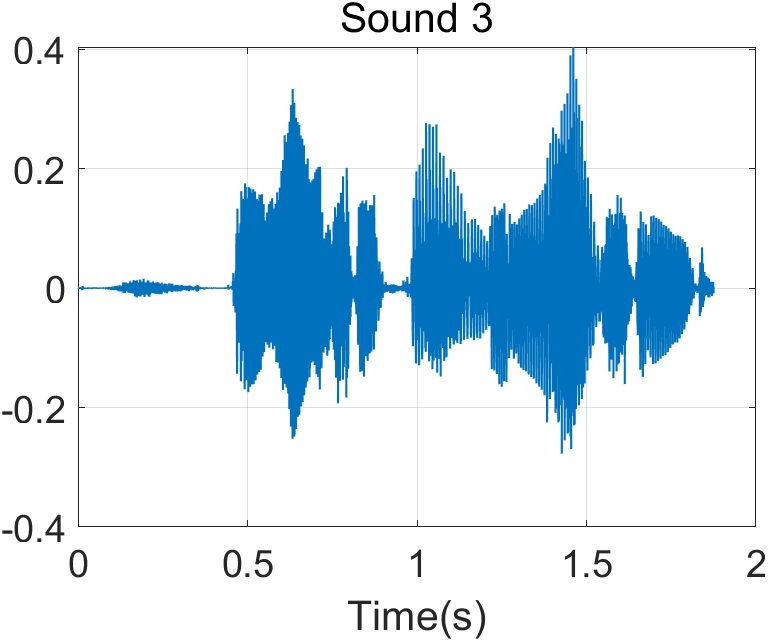}
        \caption{}
        \label{fig:soundImg3}
    \end{subfigure}
    \begin{subfigure}{.23\textwidth}
        \centering
        \includegraphics[width=\textwidth]{./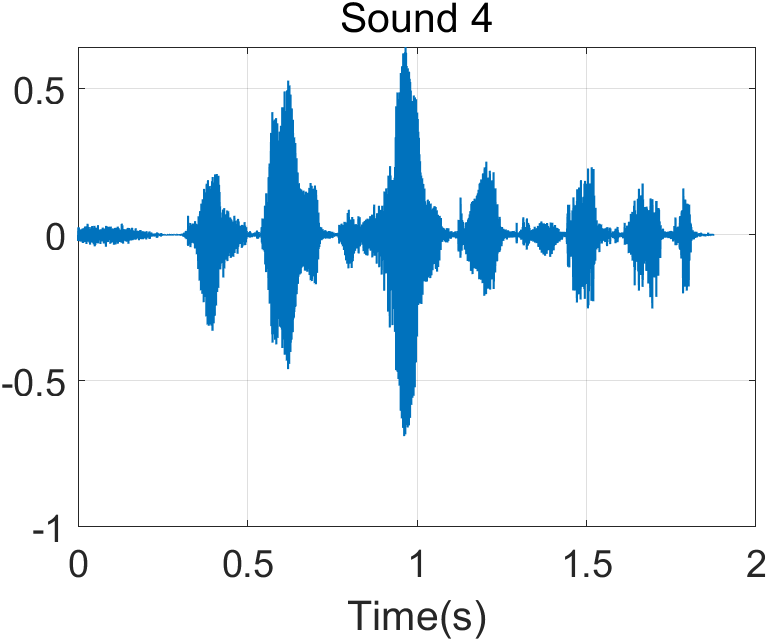}
        \caption{}
        \label{fig:soundImg4}
    \end{subfigure}
    \caption{The waveform of the four selected sound data.}
    \label{fig:soundImg}
\end{figure}

The experimental setup and parameter configurations mirror those employed in the basic wave experiments, with one notable exception: the optimization approach for PKA is adjusted to the gradient ascent method. This modification is made to adapt to the specific characteristics of sound data processing. The results of this experiment are presented in \cref{tab:audioTab}, from which we can find that the skewness-based algorithms, i.e., PSA, NPSA, and RPSA, have a relatively better performance than that in the basic wave experiment. However, the performance of PKA is still the best among all algorithms. The reason is that kurtosis is a more suitable non-Gaussianity metric for sound data separation tasks, and the proposed PKA algorithm can theoretically achieve the optimal solution.

\begin{table}[htbp!]
    \centering
    \caption{Sound data separation performance of JADE, cFastICA, PSA, NPSA, RPSA, and PKA. An average result of 100 runs is computed}
    \label{tab:audioTab}
    \begin{tabular}{c|c|c|c|c|c|c}
        \hline
        \text{Index} & \text{JADE} & \text{cFastICA} & \text{PSA} & \text{NPSA} & \text{RPSA} & \text{PKA}   \\ \hline \hline
        \text{ISI}   & 0.0448      & 0.1110          & 0.0362     & 0.0603      & 0.2897      & \bf{0.0295}  \\
        \text{ACC}   & 0.9974      & 0.9935          & 0.9977     & 0.9967      & 0.9886      & \bf{0.9978}  \\
        \text{SDR}   & 25.2259     & 21.8534         & 24.3058    & 22.6385     & 21.9083     & \bf{28.6976} \\ \hline
    \end{tabular}
\end{table}

\subsection{Experiment on Radar Data}
In this part, we assess the effectiveness of various methods in the task of suppressing radar interference. In a linear array radar system, the signals received are essentially a blend of signals originating from various angles. For this experiment, we have chosen the Linear Frequency Modulation (LFM) signal \cite{cowell2010separation}, a widely utilized signal in radar applications, as the send signal. The types of interference selected for this study include Comb Spectrum Interference (CSI) \cite{ma2022jamming} and Interrupted-Sampling Repeater Jamming (ISRJ) \cite{wang2007mathematic}. It should be noted that CSI is generally orthogonal to the target signal. Meanwhile, ISRJ can be considered as a repeat of the send signal of the radar, hence, it could be nonorthogonal to the desired target signal.

Assume that the incoming angles of the target signal and the interference are $\theta_1$ and $\theta_2$, respectively, and the beam width of the array is $\theta_{\text{3dB}}$. For both CSI and ISRJ, we manually vary the normalized angle difference $\Delta \theta = \frac{|\theta_1 - \theta_2|}{\theta_{\text{3dB}}}$ between the target signal and the interference. Additionally, we also adjust the Signal-to-Noise Ratio (SNR) and the Signal-to-Interference Ratio (SIR) of the incoming data.

All algorithms are executed with the aim of separating the target signal and suppressing the interference. The effectiveness of the interference suppression is quantified by the improvement in the SIR after processing. The results of these evaluations are presented in \cref{fig:radarExp2} and \cref{fig:radarExp}, providing a comparative analysis of the performance of the algorithms under varying conditions of angle difference, SNR, and SIR.

\begin{figure}[htb!]
    \centering
    \begin{subfigure}{.32\textwidth}
        \centering
        \includegraphics[width=.9\textwidth]{./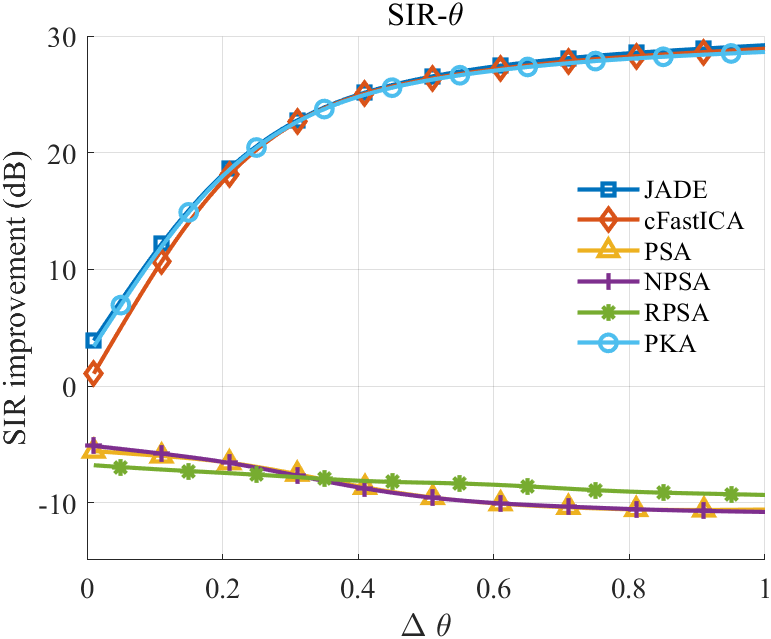}
        \caption{}
    \end{subfigure}
    \begin{subfigure}{.32\textwidth}
        \centering
        \includegraphics[width=.9\textwidth]{./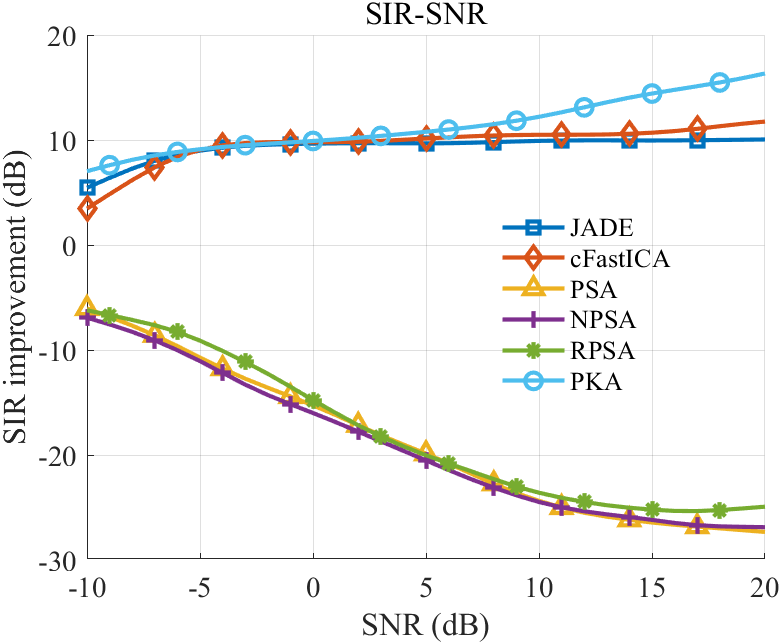}
        \caption{}
        \label{fig:radarExp2b}
    \end{subfigure}
    \begin{subfigure}{.32\textwidth}
        \centering
        \includegraphics[width=.9\textwidth]{./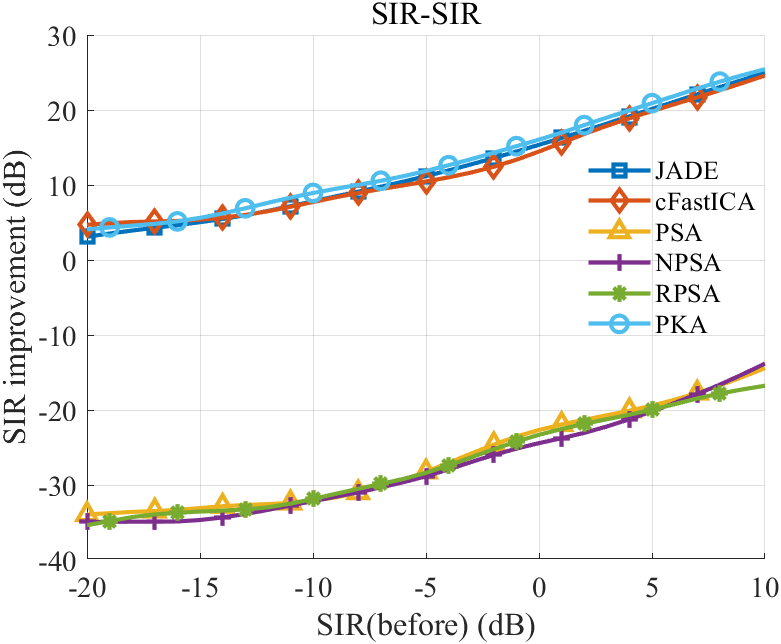}
        \caption{}
        \label{fig:radarExp2c}
    \end{subfigure}
    \caption{The results of CSI suppressing experiment. (a) the variation of SIR improvement with angle difference $\Delta \theta$, where SNR=$10\text{dB}$, SIR=$0\text{dB}$, (b) the variation of SIR improvement with SNR, where $\Delta \theta=1$, SIR=$0\text{dB}$, (c) the variation of SIR improvement after suppression with SIR before suppression, where SNR=$10\text{dB}$, $\Delta \theta=1$}
    \label{fig:radarExp2}
\end{figure}

\begin{figure}[htb!]
    \centering
    \begin{subfigure}{.32\textwidth}
        \centering
        \includegraphics[width=.9\textwidth]{./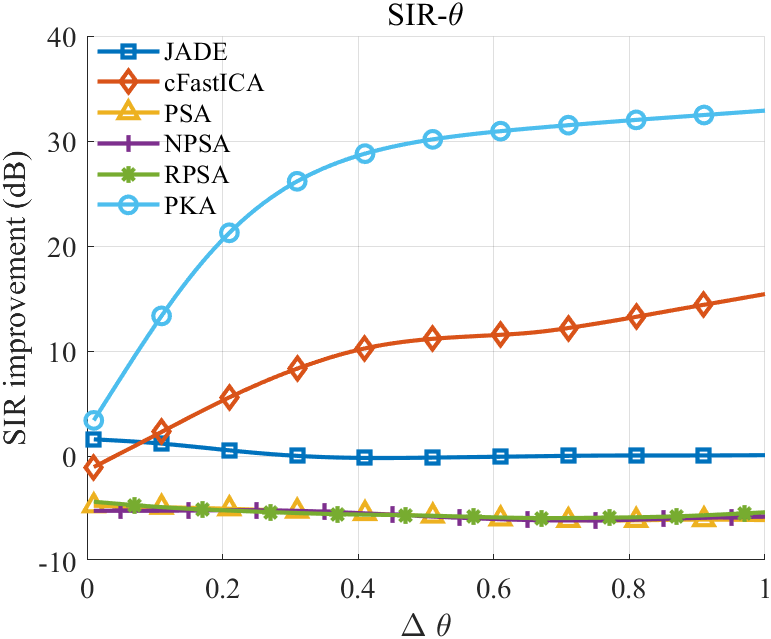}
        \caption{}
    \end{subfigure}
    \begin{subfigure}{.32\textwidth}
        \centering
        \includegraphics[width=.9\textwidth]{./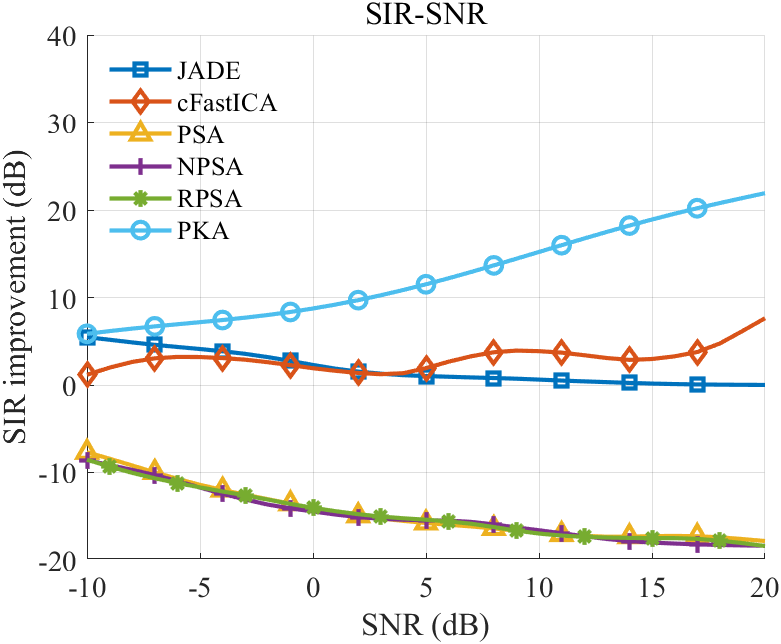}
        \caption{}
    \end{subfigure}
    \begin{subfigure}{.32\textwidth}
        \centering
        \includegraphics[width=.9\textwidth]{./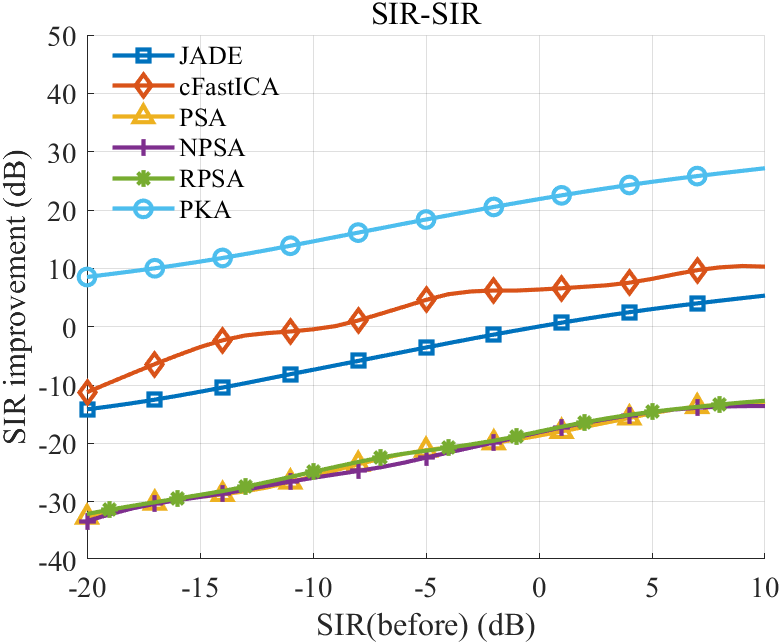}
        \caption{}
    \end{subfigure}
    \caption{The results of ISRJ suppressing experiment. (a) the variation of SIR improvement with the angle difference $\Delta \theta$, where SNR=$10\text{dB}$, SIR=$0\text{dB}$, (b) the variation of SIR improvement with SNR, where $\Delta \theta=1$, SIR=$0\text{dB}$, (c) the variation of SIR improvement after suppression with SIR before suppression, where SNR=$10\text{dB}$, $\Delta \theta=1$}
    \label{fig:radarExp}
\end{figure}

It can be found in \cref{fig:radarExp2,fig:radarExp} that for the complex radar signals, the skewness-based algorithms, i.e., PSA, NPSA, and RPSA are unable to suppress the interference effectively, since both the signal and the interference having a skewness value of zero. In contrast, the kurtosis-based algorithms, i.e., JADE, cFastICA, and PKA, have a better performance in the radar interference suppression task. From \cref{fig:radarExp2}, it can be seen that as the angle difference between the CSI interference and the target, SNR, and SIR increase, the performance of all kurtosis-based algorithms improves. Moreover, since the interference is orthogonal to the target signal, PKA has a similar performance to JADE and cFastICA, and all of them outperform the other algorithms. As for the ISRJ interference (\cref{fig:radarExp}), due to the interference and target signal being nonorthogonal, which contradicts the assumptions of JADE and cFastICA about the source signals, their performance inevitably deteriorates. Meanwhile, from \cref{fig:radarExp}, it can be found that the performance of PKA is always better than the other algorithms, which further confirms the superiority of PKA in nonorthogonal signal separation tasks.

\section{Conclusion}
Orthogonal constraints are commonly employed in ICA algorithms to separate independent sources. However, these constraints imply that the sources are mutually independent (orthogonal to each other), which is not always the case in practical scenarios. Consequently, the performance of orthogonal constraint-based algorithms is limited when dealing with non-orthogonal sources. In this paper, we introduce a non-zero volume constraint and a Riemannian gradient-based algorithm to accurately and effectively separate non-orthogonal sources. Additionally, we extend tensor-based algorithms to the complex domain by utilizing the fourth-order statistic (kurtosis) as the measure of non-Gaussianity. Experiments on both synthetic and real-world datasets demonstrate the effectiveness of the proposed algorithm. However, during the experiments, we observed that the proposed algorithm is sensitive to the initialization of the transformation matrix. In future work, we plan to investigate more robust initialization methods to enhance the performance of the proposed algorithm.
\printbibliography[heading=simplepaper]
\end{document}